\newtheorem{theorem}{Theorem}[section]
\newtheorem{lemma}[theorem]{Lemma}
\newtheorem{proposition}[theorem]{Proposition}
\title{The structure of graphs with given number of blocks and the maximum Wiener index}
\author{St\'ephane Bessy\footnote {Laboratoire d'Informatique, de Robotique et de
  Micro\'{e}lectronique de Montpellier (LIRMM), Universit\'e de
  Montpellier, France, {\tt stephane.bessy@lirmm.fr}.},
Fran\c cois Dross\footnote {Laboratoire d'Informatique, de Robotique et de
  Micro\'{e}lectronique de Montpellier (LIRMM), Universit\'e de
  Montpellier, France, {\tt francois.dross@lirmm.fr}.},
Katar{\' i}na~Hri{\v n}{\'a}kov{\'a}\footnote{Department of Mathematics, 
  Faculty of Civil Engineering, Slovak University of Technology in Bratislava, 
  Radlinsk{\'e}ho 11, 810 05, Bratislava, Slovakia, {\tt hrinakova@math.sk}.},\\
Martin Knor\footnote{Department of Mathematics, 
  Faculty of Civil Engineering, Slovak University of Technology in Bratislava, 
  Radlinsk{\'e}ho 11, 810 05, Bratislava, Slovakia, {\tt knor@math.sk}.},
Riste~{\v S}krekovski\footnote{Faculty of Information
  Studies, 8000 Novo Mesto 
  \& Faculty of Mathematics and Physics, University of Ljubljana, 1000 Ljubljana 
  \& FAMNIT, University of Primorska, 6000 Koper, Slovenia,
  {\tt skrekovski@gmail.com}.}}
\begin{document}

\maketitle

\begin{abstract}
The Wiener index (the distance) of a connected graph is the sum of
distances between all pairs of vertices. In this paper, we study the
maximum possible value of this invariant among graphs on $n$ vertices
with fixed number of blocks $p$. It is known that among graphs on $n$
vertices that have just one block, the $n$-cycle has the largest
Wiener index.  And the $n$-path, which has $n-1$ blocks, has the
maximum Wiener index in the class of graphs on $n$ vertices.  We show
that among all graphs on $n$ vertices which have $p\ge 2$ blocks, the
maximum Wiener index is attained by a graph composed of two cycles
joined by a path (here we admit that one or both cycles can be
replaced by a single edge, as in the case $p=n-1$ for example).

\end{abstract}





\section{Introduction}

Let $G$ be a simple graph.
By $V(G)$ and $E(G)$ we denote the vertex set and the edge set of $G$,
respectively.
Let $u$ and $v$ be two vertices of $G$.
The length of a shortest $u{-}v$ path is denoted by $d_G(u,v)$,
or simply by $d(u,v)$ if no confusion is likely.
The Wiener index is defined as the sum of the distances between
all (unordered) pairs of vertices of $G$,
$$
W(G)=\sum_{\{u,v\}\subseteq V(G)} d(u,v).
$$
The {\em transmission} of a vertex $v$ is the sum of the distances from $v$
to other vertices of $G$, i.e.,
$w_G(v)=\sum_{u\in V(G)} d_G(u,v)$.
Then the Wiener index of $G$ equals
$\frac{1}{2}\sum_{u\in G} w_G (u)$.

The Wiener index was introduced by Wiener \cite{Wiener} in 1947,
thus it is one of the oldest topological descriptors.
At first it was used for predicting the boiling points of paraffins,
later some other applications of the Wiener index were revealed.
Many years later it was studied also from a purely graph-theoretical
point of view.
But mathematicians studied the Wiener index under different
names, such as the gross status \cite{Harary}, the distance of a graph
\cite{EJS} and the transmission \cite{Soltes}.
More details can be found in some of the many surveys, see e.g.
\cite{surv1,ma4,ma5,Gut-sur}.

If $G$ is a connected graph and $v$ is a cut-vertex that partitions
$G$ into subgraphs $G_1$ and $G_2$, i.e., $G=G_1\cup G_2$ and
$G_1\cap G_2 = \{v\}$, then we write $G=G_1\circ_{v}G_2$,
or simply $G=G_1\circ G_2$.
By $C_n$, $P_n$ and $K_n$ we denote a cycle, path and a complete graph,
respectively, on $n$ vertices.
We will abuse this notation by writing $C_2=K_2$.
Then our main result is the following statement.

\begin{theorem}
\label{thm:main}
Let $n$ and $p$ be numbers such that $n>p>1$.
Among all graphs on $n$ vertices with $p$ blocks, the maximum
Wiener index is attained by the graph
$C_a\circ_u P_{p-1}\circ_v C_b$
for some integers $a \ge 2$ and $b \ge 2$, where $a+b=n-p+3$,
and $u$ and $v$ are distinct endvertices of $P_{p-1}$.
\end{theorem}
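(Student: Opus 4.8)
The plan is to fix a graph $G$ attaining the maximum Wiener index among all graphs on $n$ vertices with exactly $p$ blocks, and to deform it, through a sequence of structure-preserving operations that never decrease $W$, into the claimed graph $C_a\circ_u P_{p-1}\circ_v C_b$. The basic computational engine throughout is the additivity of the Wiener index across a cut-vertex: if $G=G_1\circ_v G_2$ with $n_i=|V(G_i)|$, then
\[
W(G)=W(G_1)+W(G_2)+(n_2-1)\,w_{G_1}(v)+(n_1-1)\,w_{G_2}(v).
\]
This identity lets me localize the effect of each operation to the transmissions $w_G(v)$ of a few cut-vertices, so that comparing $G$ with a modified graph reduces to comparing transmissions inside individual blocks.

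First I would show that in an extremal $G$ every block is either a single edge or a cycle. Given a $2$-connected block $B$ on $k\ge 3$ vertices, I replace it by $C_k$ while keeping all of its cut-vertices on the cycle; since $C_k$ has the maximum Wiener index among $2$-connected graphs on $k$ vertices, the internal distances do not decrease, and one can place the (at most two relevant) attachment vertices antipodally so that the pass-through distances do not decrease either. Next I would prove that the block-cut tree of $G$ is a path. If some cut-vertex had three or more incident branches, I would detach one branch and append it to the end of a longest branch: using the additivity formula, the vertices of the relocated branch move farther from the bulk of $G$, so $W$ strictly increases, contradicting extremality. This is the standard ``a path beats a star'' phenomenon, here carried out at the level of the block-cut tree.

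Once $G$ is a chain of blocks $B_1,\dots,B_p$, each a cycle or an edge, it remains to concentrate all cyclic mass into the two end blocks. The key comparison is a migration operation: take an internal cycle block and transfer its surplus vertices (all but the two cut-vertices) onto one of the two terminal blocks, enlarging that end cycle. Using the additivity formula together with the explicit transmissions $w_{C_k}(v)$ and $w_{P_m}(v)$, I would show that a vertex is more valuable, for $W$, sitting on a peripheral end cycle, where it is far from the entire opposite end, than sitting on an internal cycle, where it is squeezed between the two halves of the graph; hence the transfer does not decrease $W$. Iterating collapses every internal block to a single edge, leaving precisely a path $P_{p-1}$ of $p-2$ bridges capped by two cycles $C_a$ and $C_b$; the vertex count forces $a+b=n-p+3$, and the degenerate values $a=2$ or $b=2$ recover the cases where an end block is an edge.

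I expect the migration step of the last paragraph to be the main obstacle. Unlike the earlier reductions, which are essentially local to one block, relocating vertices changes every distance that crosses the affected region, so the inequality must be established by a genuine global estimate rather than a single application of the additivity formula; controlling the interaction between the length of the central path and the sizes of the two end cycles, and ruling out that a third cycle somewhere in the middle could ever help, is where the careful bookkeeping lies. A secondary subtlety is making the cycle-replacement of the first step rigorous when a block carries two cut-vertices, since then maximizing the internal distances and maximizing the pass-through distance must be reconciled simultaneously.
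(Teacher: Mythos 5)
Your overall plan (make every block a cycle or an edge, force the blocks-tree to be a path, then concentrate the cyclic mass in the two end blocks) does match the skeleton of the paper's argument, but the step you dismiss as ``the standard path beats a star phenomenon'' is exactly where the real difficulty of this theorem lies, and the exchange you propose for it is false as stated. Concretely, let $G$ consist of a vertex $v$ carrying two pendant edges $va$ and $vc$, together with one long branch $B$: a path $v=p_1,\dots,p_s$ followed by a cycle $C_m$ (with $m$ even) through $p_s$, and let $u$ be the cycle vertex antipodal to $p_s$. Here $v$ is a cut-vertex with three branches, $B$ is the longest one, and your move reattaches the pendant edge $va$ at $u$. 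Writing $H=G-a$, the change in the Wiener index is $w_{H}(u)-w_{H}(v)=2s+m-2-\tfrac{sm}{2}$, which equals $-4$ for $s=4$, $m=10$: the relocated vertex gets closer (by roughly $s-1$ each) to all $m$ vertices of the cycle and moves farther only from $c$ and the path. So the relocated branch does \emph{not} ``move farther from the bulk of $G$'', the move can strictly decrease $W$, and no contradiction with extremality is obtained. This is precisely why the paper needs heavy machinery for this step: Lemma~\ref{lem:distance-2-vertices} shows that if \emph{neither} of two symmetric reattachments increases $W$, then the two cycle tips are at distance at least $\tfrac{n-1}{2}$; Theorem~\ref{thm:k-dist} then rules out four or more terminal blocks (Lemma~\ref{lema:<4}); and the case of exactly three terminal blocks requires the long computational Lemma~\ref{lema:2-terminal}, which builds a quite different comparison graph $G'$ (merging the two smaller cycles and concatenating their paths) and invokes Theorem~\ref{thm:n+1}. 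That lemma is the bulk of the paper's proof, and your proposal replaces it with a one-line heuristic that a counterexample defeats.

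There are two further gaps. First, your opening reduction ``every block can be replaced by a cycle'' is only valid for blocks with at most two attachment vertices (the paper's Lemmas~\ref{lem:terminal-blocks} and~\ref{lem:traversal-blocks}); for a block with three attachment vertices it can fail, because three vertices on a cycle have pairwise distances summing to at most $n_0$, whereas in a $2$-connected graph (a theta graph with even arms) the sum can reach $n_0+1$ by Theorem~\ref{thm:n+1}, so cycling such a block may decrease pass-through distances. Your parenthetical ``(at most two relevant) attachment vertices'' presupposes the blocks-tree is already a path, i.e.\ it uses your step~2 before proving it. Second, the migration step you acknowledge leaving open is carried out in the paper not by moving internal cycle mass to the ends, but by a purely local merge of two \emph{adjacent} cycles into one cycle plus an edge (Lemma~\ref{lem:KC} and Lemma~\ref{lem:replacing-two-cycles}), iterated along chains in Lemma~\ref{lema:chains}; this local formulation is what keeps the bookkeeping finite. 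As it stands, your proposal identifies the correct extremal structure, but both genuinely hard steps are unproved, and the one argument you do sketch is incorrect.
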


Note that $C_a$ or $C_b$ can also be $C_2$, i.e.~an edge, and then we
obtain $C_{n-p+1}\circ_u P_{p}$, which is a graph composed of one cycle
with an attached path. In the case when $p=n-1$, both $C_a$ and $C_b$ are
edges, i.e.~$a=b=2$.

The proof of Theorem~\ref{thm:main} is rather technical. Therefore, 
the exact values of $a$ and $b$ will be determined in a forthcoming paper~\cite{BDHKS}.
Let $W_n(p)$ be the maximum Wiener index of a graph which has $n$ vertices and $p$ blocks.
In~\cite{BDHKS} we study $W_n(p)$ and we determine its minimum values.

\medskip

Now, we introduce notations and definitions which we use throughout
the paper.  If $G,G_0,G_1,\dots$ are graphs, we denote by
$n,n_0,n_1,\dots$, respectively, their numbers of vertices.  For $v\in
V(G)$, by $e_G(v)$ we denote the {\em eccentricity} of $v$ in $G$,
i.e., the maximal distance from $v$ in $G$.

A graph is {\em nonseparable} if it is connected and has no
cut-vertices (i.e.~either it is $2$-connected or it is $K_2$).
A {\it block} of $G$ is a maximal non-separable subgraph of $G$.
Two blocks sharing a common vertex are said to be {\em adjacent}.
We refer to~\cite{BM08} concerning the structure of blocks in a 
connected graph.
In particular, it is known that the bipartite graph built on the set
of blocks of $G$ and the set of cut-vertices of $G$ by linking
a block to the cut-vertices it contains, is a tree.
This tree is called the {\em blocks-tree} of $G$.

Let $H$ be a subgraph of $G$, such that $H$ is a connected union of several
(at least one) blocks of $G$.
An {\em attachment vertex} of $H$ is a vertex of $H$ which has a neighbour
in $G\setminus H$.
The subgraph $H$ is {\em terminal} if $H$ contains exactly one attachment
vertex.
It is {\em traversal} if it contains exactly two attachment vertices.

Let $G$ be a connected graph on $n$ vertices.
The {\em distance vector} of a vertex $v$ is the $e_G(v)$-dimensional
vector $d_G(v)$ given by $d_G(v)_i=|\{x\in G\,:\ d_G(v,x)=i\}|$.
If $\omega$ is a vector, then $\langle\omega\rangle$ is the value
$\sum_{i} i \omega_i$.
Observe that $\langle d_G(v) \rangle=w_G(v)$.

Now we define ${\bf 2}_n$.
If $n$ is even, the vector ${\bf 2}_n$ has dimension $n/2$ and
contains the value $2$ in each coordinate except for the last one
which is 1.
If $n$ is odd, ${\bf 2}_n$ has dimension $(n-1)/2$ and each of
its coordinates has value 2.
For example ${\bf 2}_7=(2,2,2)$ and ${\bf 2}_6=(2,2,1)$.
Observe that the vectors $d_{C_n}(x)$ and ${\bf 2}_n$ are the same for every
vertex $x$ of the cycle $C_n$.
Hence we obtain $w_{C_n}(x)=\frac{n^2}4$ if $n$ is even and
$w_{C_n}(x)=\frac{n^2-1}4$ if $n$ is odd.
Also observe that if $G$ is a $2$-connected graph, then the distance
vector of every vertex $v$ of $G$ satisfies $d_G(v)_i\ge 2$ for every
$i<e_G(v)$, and so $w_G(v)\le \langle {\bf 2}_n\rangle$.
Moreover, if $G$ is different from a cycle then it has a vertex $u$, such
that $d_G(u)_1\ge 3$, which means that
$w_G(u)<\langle{\bf 2}_n\rangle$.
So we get the following classical result.

\begin{proposition}
\label{prop:C_n}
Let $G$ be a $2$-connected graph on $n$ vertices and let $v\in V(G)$.
Then
$$
w_G(v)\le
\begin{cases}
\frac{n^2}4 \qquad\quad\mbox{if $n$ is even;}\\
\frac{n^2-1}4 \qquad\mbox{if $n$ is odd.}
\end{cases}
$$
Moreover, if $G$ is $C_n$ then equality holds for every vertex $v\in V(C_n)$.
Further, the cycle $C_n$ is the unique graph which has the maximal
Wiener index over the class of 2-connected graphs on $n$ vertices, and
$$
W(C_n)=
\begin{cases}
\frac{n^3}8 \qquad\quad\mbox{if $n$ is even;}\\
\frac{n^3-n}8 \qquad\mbox{if $n$ is odd.}
\end{cases}
$$
\end{proposition}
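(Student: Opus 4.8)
The plan is to assemble the per-vertex bound from the layer structure of a $2$-connected graph, and then to settle the equality and uniqueness cases separately. First I would fix a vertex $v$ and examine the layers $L_i=\{x : d_G(v,x)=i\}$ for $0\le i\le e_G(v)$, so that $|L_i|=d_G(v)_i$. The key structural fact, already recorded just before the statement, is that $d_G(v)_i\ge 2$ whenever $i<e_G(v)$: if some intermediate layer $L_i$ consisted of a single vertex $x$, then every path from $v$ to a vertex in a farther layer would pass through $x$, making $x$ a cut-vertex and contradicting $2$-connectivity. (The case $G=K_2=C_2$ is trivial, since then there is no index $i<e_G(v)$.)

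Second, I would convert this constraint into the numerical bound. Among all vectors $d$ with $\sum_i d_i=n-1$, with $d_i\ge 2$ for $i<e$ and $d_e\ge 1$, the value $\langle d\rangle=\sum_i i\,d_i$ is maximized by spreading the $n-1$ vertices as thinly as the constraints permit, namely two per layer for as long as possible. This is exactly the vector ${\bf 2}_n$, which realizes eccentricity $\lfloor n/2\rfloor$; a short exchange argument shows that this maximizer is unique and that $\langle{\bf 2}_n\rangle$ equals $n^2/4$ for even $n$ and $(n^2-1)/4$ for odd $n$. Since $w_G(v)=\langle d_G(v)\rangle$, this yields the first displayed inequality for every vertex $v$.

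For the equality case and the value of $W(C_n)$, I would observe that in $C_n$ every vertex has distance vector exactly ${\bf 2}_n$, so equality holds at each vertex; hence $W(C_n)=\tfrac12\sum_{v}w_{C_n}(v)=\tfrac{n}{2}\langle{\bf 2}_n\rangle$, which evaluates to $n^3/8$ for even $n$ and $(n^3-n)/8$ for odd $n$.

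Finally, for uniqueness of the extremal graph, I would use the elementary fact that a connected $2$-regular graph is a cycle: if a $2$-connected graph $G$ is not $C_n$, it has a vertex $u$ of degree at least $3$, so $d_G(u)_1\ge 3$ and therefore $d_G(u)\ne{\bf 2}_n$. By the uniqueness of the maximizer established in the second step, $w_G(u)<\langle{\bf 2}_n\rangle$ strictly, while $w_G(v)\le\langle{\bf 2}_n\rangle$ for all remaining vertices; summing gives $W(G)=\tfrac12\sum_v w_G(v)<\tfrac{n}{2}\langle{\bf 2}_n\rangle=W(C_n)$. I expect the main (though modest) obstacle to be making the exchange argument of the second step precise enough to deliver the \emph{uniqueness} of ${\bf 2}_n$ as the extremal vector, since that single fact drives both the bound and the strict inequality needed for uniqueness; the layer-size lemma is the only other delicate point, and both are already flagged in the discussion preceding the statement.
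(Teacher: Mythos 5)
Your proposal is correct and follows essentially the same route as the paper: the layer-size observation $d_G(v)_i\ge 2$ for $i<e_G(v)$, comparison of the distance vector with ${\bf 2}_n$ to get $w_G(v)\le\langle{\bf 2}_n\rangle$, equality at every vertex of $C_n$, and strictness at a vertex of degree at least $3$ in any $2$-connected non-cycle. You actually supply more detail than the paper (the cut-vertex justification of the layer bound, the exchange argument for uniqueness of the maximizing vector), which the paper leaves as observations preceding the statement.
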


We use also the following obvious statement.

\begin{proposition}
\label{prop:P_n}
Let $v$ be an endvertex of $P_n$.
Then
$$
w_{P_n}(v)=\binom n2
\qquad\mbox{and}\qquad
W(P_n)=\binom{n+1}3.
$$
\end{proposition}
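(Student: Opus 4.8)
The plan is to compute both quantities directly, exploiting the linear structure of the path. Label the vertices of $P_n$ as $x_1,x_2,\dots,x_n$ along the path so that $v=x_1$ is the chosen endvertex. The key observation is that $d_{P_n}(x_1,x_i)=i-1$, since the unique shortest $x_1$--$x_i$ path traverses exactly $i-1$ edges. Summing over all vertices then gives
$$
w_{P_n}(v)=\sum_{i=1}^{n}(i-1)=\sum_{j=0}^{n-1}j=\binom{n}{2},
$$
which is the first claim. Equivalently, one notes that exactly one vertex lies at each distance $1,2,\dots,n-1$ from $x_1$, so the distance vector $d_{P_n}(x_1)$ has all coordinates equal to $1$ and dimension $n-1$; hence $w_{P_n}(v)=\langle d_{P_n}(x_1)\rangle=\sum_{i=1}^{n-1}i=\binom{n}{2}$.

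For the Wiener index I would count pairs by their distance. For each $k$ with $1\le k\le n-1$, the pairs $\{x_i,x_{i+k}\}$ at distance exactly $k$ are indexed by $i\in\{1,\dots,n-k\}$, so there are precisely $n-k$ of them. Hence
$$
W(P_n)=\sum_{k=1}^{n-1}k\,(n-k),
$$
and it remains only to evaluate this in closed form. Splitting the sum as $n\sum_{k=1}^{n-1}k-\sum_{k=1}^{n-1}k^2$ and applying the standard formulas for these two sums yields $\tfrac{(n-1)n(n+1)}{6}=\binom{n+1}{3}$, as required.

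As a sanity check I would also record the inductive route. Deleting the endvertex $x_n$ leaves $P_{n-1}$, and every pair involving $x_n$ contributes its distance to $x_n$; since $x_n$ is itself an endvertex, these contributions sum to $w_{P_n}(x_n)=\binom{n}{2}$ by the first part. Thus $W(P_n)=W(P_{n-1})+\binom{n}{2}$, and with $W(P_1)=0$ the hockey-stick identity $\sum_{m=2}^{n}\binom{m}{2}=\binom{n+1}{3}$ closes the induction.

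There is essentially no genuine obstacle here---this is exactly the ``obvious'' computation the text promises. The only point requiring any care is the bookkeeping in the pair-counting step: one must ensure each unordered pair is counted exactly once, which is automatic once pairs are organized by the gap $k=|i-j|$ rather than by unordered index sets. Either the direct summation or the inductive argument suffices, and both reduce the proposition to elementary identities for $\sum k$, $\sum k^2$, or the hockey-stick sum.
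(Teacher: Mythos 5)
Your proof is correct. The paper states this proposition without proof (it is introduced as an ``obvious statement''), and your direct computation---the endvertex has distance vector $(1,1,\dots,1)$ giving $w_{P_n}(v)=\binom n2$, and counting the $n-k$ pairs at each distance $k$ gives $W(P_n)=\sum_{k=1}^{n-1}k(n-k)=\binom{n+1}3$---is exactly the standard elementary argument the authors intend, so there is nothing to compare beyond noting that either your direct summation or your inductive variant fully justifies the claim.
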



\section{Proof of Theorem~{\ref{thm:main}}}

In this section we prove Theorem~{\ref{thm:main}} using a couple
of auxiliary results.
The first two propositions will be useful to calculate Wiener
index of a~graph composed of two or more subgraphs joined by cut-vertices.
The proofs are straightforward, so we omit them.
Recall that the number of vertices of $G_i$ is denoted by $n_i$.

\begin{proposition}
\label{prop:W1}
Let $G=G_1\circ_vG_2$.
We have
$$
W(G) = W(G_1) + W(G_2) + w_{G_1}(v)\cdot(n_2-1) + w_{G_2}(v)\cdot(n_1-1).
$$
\end{proposition}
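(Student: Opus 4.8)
The plan is to split the set of unordered vertex pairs of $G$ according to where the two endpoints lie, and to exploit that $v$ is a cut-vertex. Since $V(G_1)\cap V(G_2)=\{v\}$, I first write $V(G)=A\cup\{v\}\cup B$ as a disjoint union, where $A=V(G_1)\setminus\{v\}$ and $B=V(G_2)\setminus\{v\}$, so that $|A|=n_1-1$ and $|B|=n_2-1$. The key metric facts I would establish at the outset are: (i) for $x,y\in V(G_i)$ we have $d_G(x,y)=d_{G_i}(x,y)$, because a shortest path between two vertices on the same side gains nothing by crossing to the other side (any such excursion enters and leaves through $v$ and can be cut out); and (ii) for $x\in A$ and $y\in B$ we have $d_G(x,y)=d_{G_1}(x,v)+d_{G_2}(v,y)$, since every $x$--$y$ path is forced through the cut-vertex $v$ and splits there.

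Next I would decompose the Wiener sum along the three natural classes of pairs. The classes -- both endpoints in $V(G_1)$, both in $V(G_2)$, and one in $A$ with the other in $B$ -- partition all unordered pairs, with no overcounting, since the only vertex common to $V(G_1)$ and $V(G_2)$ is $v$ and hence no two-element pair lies in both. Thus
$$W(G)=\sum_{\{x,y\}\subseteq V(G_1)}d_G(x,y)+\sum_{\{x,y\}\subseteq V(G_2)}d_G(x,y)+\sum_{x\in A,\,y\in B}d_G(x,y).$$
By (i) the first two sums equal $W(G_1)$ and $W(G_2)$. For the cross term, (ii) lets me factor the double sum as
$$\sum_{x\in A,\,y\in B}d_G(x,y)=\sum_{x\in A}\sum_{y\in B}\bigl(d_{G_1}(x,v)+d_{G_2}(v,y)\bigr)=|B|\sum_{x\in A}d_{G_1}(x,v)+|A|\sum_{y\in B}d_{G_2}(v,y).$$
Since the omitted term $d_{G_i}(v,v)$ is zero, we have $\sum_{x\in A}d_{G_1}(x,v)=w_{G_1}(v)$ and $\sum_{y\in B}d_{G_2}(v,y)=w_{G_2}(v)$, so the cross term is $(n_2-1)\,w_{G_1}(v)+(n_1-1)\,w_{G_2}(v)$, and adding the first two sums gives exactly the claimed identity.

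The arithmetic is pure bookkeeping; the only step requiring genuine care is the pair of distance identities (i) and (ii), namely that the cut-vertex $v$ both preserves intra-part distances and serves as the unique gateway between the two parts. I would make this rigorous with the standard cut-vertex argument: any walk joining the two sides must visit $v$, and a cut-and-paste shows no shortest path crosses $v$ more than necessary. Once (i) and (ii) are in hand, the remaining computation over the three classes of pairs is routine.
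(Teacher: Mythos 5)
Your proof is correct and is precisely the ``straightforward'' argument that the paper omits (it states before Proposition~\ref{prop:W1} that the proofs are straightforward and skips them): partition the unordered pairs into those inside $V(G_1)$, those inside $V(G_2)$, and the cross pairs, then use the cut-vertex $v$ to justify $d_G(x,y)=d_{G_i}(x,y)$ within each part and $d_G(x,y)=d_{G_1}(x,v)+d_{G_2}(v,y)$ across parts. Nothing is missing; your handling of the potential overcount (pairs containing $v$) and of the zero term $d_{G_i}(v,v)$ is exactly what makes the bookkeeping close.
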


Observe that the subgraphs $G_1$ and $G_2$ in the previous proposition do
not need to be blocks.
In fact, each of these graphs is either a block or a connected union of
blocks of $G$.
Using an inductive argument we can get the following generalization of
Proposition~{\ref{prop:W1}}

\begin{proposition}
\label{prop:W2}
Let $G_1,G_2,\dots,G_{\ell}$ be blocks or connected unions of blocks of $G$,
such that $E(G_1),E(G_2),\dots,E(G_{\ell})$ is an edge decomposition of
$E(G)$.
Denote by $v_{i,j}$ the attachment vertex of $G_i$ which separates
$G_i\setminus\{v_{i,j}\}$ from $G_j\setminus\{v_{i,j}\}$.
Then
\begin{align}
&W(G)=\sum_{i=1}^{\ell} W(G_i)\,+\nonumber\\
\label{sum:W2}
&\sum_{1\le i<j\le\ell}
\left(w_{G_i}(v_{i,j})\cdot(n_j{-}1) + w_{G_j}(v_{j,i})\cdot(n_i{-}1)
+d_G(v_{i,j},v_{j,i})\cdot(n_i{-}1)\cdot(n_j{-}1)\right).
\end{align}
\end{proposition}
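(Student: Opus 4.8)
The plan is to argue by induction on the number $\ell$ of pieces, using Proposition~\ref{prop:W1} to peel off one piece at a time. For $\ell=1$ the claimed formula reads $W(G)=W(G_1)$ with an empty double sum, which is immediate. For $\ell=2$ the two attachment vertices coincide, $v_{1,2}=v_{2,1}$, so the term $d_G(v_{1,2},v_{2,1})(n_1{-}1)(n_2{-}1)$ vanishes and \eqref{sum:W2} is exactly Proposition~\ref{prop:W1}; this both anchors and motivates the induction.

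For the inductive step, I would consider the tree $\mathcal T$ whose nodes are the pieces $G_1,\dots,G_\ell$ together with the cut-vertices of $G$ shared by at least two pieces, a piece being joined to each shared cut-vertex it contains; this refinement of the blocks-tree is again a tree. Since a shared cut-vertex has degree at least $2$ in $\mathcal T$, every leaf of $\mathcal T$ is a piece, so I may relabel the pieces so that $G_\ell$ is a leaf. Then $G_\ell$ meets the rest of $G$ in a single vertex $v$, and, writing $G'=G_1\cup\dots\cup G_{\ell-1}$ and $n'=|V(G')|$, we have $G=G'\circ_v G_\ell$ with $v_{\ell,i}=v$ for every $i<\ell$. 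Proposition~\ref{prop:W1} gives
\[
W(G)=W(G')+W(G_\ell)+w_{G'}(v)\,(n_\ell-1)+w_{G_\ell}(v)\,(n'-1),
\]
and the induction hypothesis expands $W(G')$ into $\sum_{i<\ell}W(G_i)$ plus the double sum of \eqref{sum:W2} restricted to $1\le i<j\le\ell-1$. It then remains to match the two mixed terms above with the missing part of \eqref{sum:W2}, namely the terms with $j=\ell$.

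This matching rests on two bookkeeping identities. First, counting vertices through $\mathcal T$ shows that the cut-vertices shared among $G_1,\dots,G_{\ell-1}$ are overcounted exactly $\ell-2$ times (the total junction multiplicity equals the number of incidence edges of $\mathcal T$), whence $n'=\sum_{i<\ell}n_i-(\ell-2)$ and hence $n'-1=\sum_{i<\ell}(n_i-1)$; since $v_{\ell,i}=v$ for all $i$, this rewrites $w_{G_\ell}(v)(n'-1)$ as $\sum_{i<\ell}w_{G_\ell}(v_{\ell,i})(n_i-1)$, which is precisely the family of $j=\ell$ terms of the form $w_{G_j}(v_{j,i})(n_i-1)$. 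Second, I claim the transmission of $v$ in $G'$ splits as
\[
w_{G'}(v)=\sum_{i<\ell}\Big(w_{G_i}(v_{i,\ell})+d_G(v_{i,\ell},v)\,(n_i-1)\Big).
\]
To see this, partition $V(G')\setminus\{v\}$ into the sets $V(G_i)\setminus\{v_{i,\ell}\}$, each of size $n_i-1$: by the tree structure every non-$v$ vertex is the non-gateway vertex of exactly one piece, each gateway $v_{i,\ell}$ being assigned to the neighbouring piece lying closer to $v$. For $x\in V(G_i)$ the cut-vertex $v_{i,\ell}$ separates $x$ from $v$, so $d_{G'}(x,v)=d_{G_i}(x,v_{i,\ell})+d_{G'}(v_{i,\ell},v)$, and since $G_\ell$ is pendant the distances $d_{G'}$ and $d_G$ agree on $V(G')$; summing over the $n_i-1$ vertices of each piece yields the displayed identity. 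Multiplying by $n_\ell-1$ turns $w_{G'}(v)(n_\ell-1)$ into $\sum_{i<\ell}\big(w_{G_i}(v_{i,\ell})(n_\ell-1)+d_G(v_{i,\ell},v)(n_i-1)(n_\ell-1)\big)$, i.e.\ the remaining two families of $j=\ell$ terms.

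Substituting both identities into the Proposition~\ref{prop:W1} expansion and adjoining the newly produced $j=\ell$ terms to the inductive double sum over $i<j\le\ell-1$ reconstitutes the full double sum of \eqref{sum:W2}, completing the induction. The step I expect to require the most care is the vertex bookkeeping of the third paragraph: making precise that the gateways $v_{i,\ell}$ are assigned consistently so that the private sets $V(G_i)\setminus\{v_{i,\ell}\}$ genuinely partition $V(G')\setminus\{v\}$, and that a shortest path out of each piece factors through its gateway. Both facts are exactly the cut-vertex structure encoded in the blocks-tree, so once $\mathcal T$ is in hand the remaining computation is routine.
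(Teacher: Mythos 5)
Your proof is correct and takes exactly the route the paper indicates but omits as straightforward: induction on $\ell$, peeling off a leaf piece of the tree structure and expanding via Proposition~\ref{prop:W1}. Your two bookkeeping identities (the vertex count $n'-1=\sum_{i<\ell}(n_i-1)$ and the splitting of $w_{G'}(v)$ through the gateways $v_{i,\ell}$) are both sound, so the induction closes as claimed.
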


Observe that the last term in the second sum of Proposition~{\ref{prop:W2}}
is $0$ if $G_i$ and $G_j$ are adjacent blocks.
We remark that Proposition~{\ref{prop:W2}} holds even in the case when some
of the $G_i$ are ``trivial", i.e., if they consist of a single vertex, since
then all the terms containing $W(G_i)$, $(n_i-1)$, or $w_{G_i}(v_{i,j})$ are
zeros.

Now we show that terminal blocks are cycles or edges in extremal graphs.

\begin{lemma}
\label{lem:terminal-blocks}
Let $B$ be a terminal block of $G$ such that $B$ is not a cycle and
$|V(B)|\ge 3$.
Let $G'$ be the graph obtained from $G$ by replacing $B$
by a cycle on $|V(B)|$ vertices.
Then $W(G')>W(G)$.
\end{lemma}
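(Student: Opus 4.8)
The plan is to isolate $B$ from the rest of $G$ using its single attachment vertex, and to exploit the fact that replacing $B$ by a cycle on the same number of vertices leaves the remainder of the graph untouched. Since $B$ is terminal, it has a unique attachment vertex $u$, and $u$ is the only cut-vertex joining $B$ to the rest of $G$. Hence I can write $G=B\circ_u R$, where $R=(G\setminus B)\cup\{u\}$ is the connected union of the blocks lying on the other side of $u$; note that $n_B=|V(B)|\ge 3$ and that $n_R\ge 2$ because $B$ is terminal (so there is at least one vertex outside $B$). Writing $m=|V(B)|$ and letting $C$ denote the cycle $C_m$, the graph $G'$ admits the analogous decomposition $G'=C\circ_u R$, where the attachment vertex is placed at any vertex of $C$ (all vertices of a cycle are equivalent by symmetry).

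Next I would apply Proposition~\ref{prop:W1} to both decompositions, which gives
$$W(G) = W(B) + W(R) + w_B(u)(n_R-1) + w_R(u)(n_B-1)$$
and
$$W(G') = W(C) + W(R) + w_C(u)(n_R-1) + w_R(u)(n_C-1).$$
Since $n_C=m=n_B$, the summands $W(R)$ and $w_R(u)(n_B-1)$ are identical in the two expressions. Subtracting therefore yields
$$W(G')-W(G) = \bigl(W(C)-W(B)\bigr) + \bigl(w_C(u)-w_B(u)\bigr)(n_R-1).$$

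The final step is to check that the right-hand side is positive, and here Proposition~\ref{prop:C_n} does all the work on both terms at once. For the first term, $B$ is $2$-connected (it is a block with $|V(B)|\ge 3$) but is not a cycle, so the uniqueness part of Proposition~\ref{prop:C_n} gives $W(C)>W(B)$, making the first bracket strictly positive. For the second term, the same proposition bounds the transmission of any vertex of a $2$-connected graph on $m$ vertices by that of a cycle vertex, so $w_B(u)\le\langle{\bf 2}_m\rangle=w_C(u)$, whence $w_C(u)-w_B(u)\ge 0$; combined with $n_R-1\ge 1>0$, the second term is nonnegative. Adding a strictly positive quantity to a nonnegative one gives $W(G')>W(G)$, as required.

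I do not expect a genuine obstacle here: the only real observation is the bookkeeping that replacing $B$ by a cycle of the same order cancels the interaction contributions $W(R)$ and $w_R(u)(n_B-1)$ exactly, reducing the whole comparison to the two local inequalities $W(C)\ge W(B)$ and $w_C(u)\ge w_B(u)$, both immediate from Proposition~\ref{prop:C_n}. If anything needs care, it is confirming that $G'$ genuinely decomposes as $C\circ_u R$, i.e.\ that $R$ together with its attachment at $u$ is unaffected by the surgery; but this is clear, since only the internal structure of the terminal block $B$ is altered.
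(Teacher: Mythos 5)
Your proposal is correct and follows essentially the same route as the paper's own proof: decompose $G=B\circ_u R$ and $G'=C_{|V(B)|}\circ_u R$ at the unique attachment vertex, apply Proposition~\ref{prop:W1} to both, cancel the common terms, and conclude via Proposition~\ref{prop:C_n} that $W(C_{|V(B)|})-W(B)>0$ and $w_{C_{|V(B)|}}(u)-w_B(u)\ge 0$. The only cosmetic difference is that you additionally note $n_R\ge 2$, which is harmless but not needed, since the strict positivity already comes from the Wiener-index comparison of the blocks.
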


\begin{proof}
Denote by $v$ the attachment vertex of $B$ in $G$.
Further, denote by $G_1$ the block $B$ and denote by $G_2$ the subgraph of
$G$ such that $G_1\circ_v G_2=G$.
By Proposition~{\ref{prop:W1}} we have (recall that $n_i=|V(G_i)|$)
\begin{align*}
W(G)&=W(B)+W(G_2)+w_B(v)\cdot(n_2-1)+w_{G_2}(v)\cdot(n_1-1)\\
W(G')&=W(C_{n_1})+W(G_2)+w_{C_{n_1}}(v)\cdot(n_2-1)+w_{G_2}(v)\cdot(n_1-1)
\end{align*}
and so
$$
W(G')-W(G)=W(C_{n_1})-W(B)+\big(w_{C_{n_1}}(v)-w_B(v)\big)\cdot(n_2-1).
$$
Since $B$ is not a cycle, we have $W(C_{n_1})-W(B)>0$ by
Proposition~{\ref{prop:C_n}}.
Moreover, by Proposition~{\ref{prop:C_n}} we have also
$w_{C_{n_1}}(v)=\langle{\bf 2}_{n_1}\rangle\ge\langle d_B(v)\rangle=w_B(v)$.
Hence we obtain $W(G')>W(G)$.
\end{proof}

In a cycle $C_n$, two vertices $u$ and $v$ are {\em opposite}
(or {\em antipodal}) if they satisfy
$d_{C_n}(u,v)=\max \{d_{C_n}(x,y):x,y\in C\}=\lfloor\frac n2\rfloor$.

\begin{lemma}
\label{lem:traversal-blocks}
Let $B$ be a traversal block of $G$ with $|V(B)|=n_0\ge 3$, and let $v_1$
and $v_2$ be the two attachment vertices of $B$.
Let $C_{n_0}$ be a cycle in which $v_1$ and $v_2$ are opposite and let
$G'$ be obtained from $G$ by replacing $B$ by $C_{n_0}$.
If $B$ is not a cycle or if $B$ is a cycle and $v_1$ and $v_2$ are not
opposite in $B$, then $W(G')>W(G)$.
\end{lemma}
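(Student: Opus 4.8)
The plan is to imitate the proof of Lemma~\ref{lem:terminal-blocks}, except that now $B$ carries two attachment vertices, so I first split $G$ into three pieces and compare term by term. Since $v_1$ and $v_2$ are the only attachment vertices of $B$, we may write $G = H_1 \circ_{v_1} B \circ_{v_2} H_2$, where $H_1$ is the connected union of blocks hanging at $v_1$ (with $v_1 \in H_1$) and $H_2$ is the one hanging at $v_2$ (with $v_2 \in H_2$); thus $E(H_1)$, $E(B)$, $E(H_2)$ form an edge decomposition of $G$ into connected unions of blocks. Writing $n_1 = |V(H_1)|$ and $n_3 = |V(H_2)|$ and applying Proposition~\ref{prop:W2} to this triple (noting that $H_1$ and $B$ meet at $v_1$, that $B$ and $H_2$ meet at $v_2$, and that the only nonzero inter-distance is $d_G(v_1,v_2)=d_B(v_1,v_2)$), I obtain an expression for $W(G)$ in which $B$ enters only through the four quantities $W(B)$, $w_B(v_1)$, $w_B(v_2)$ and $d_B(v_1,v_2)$.

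The identical computation applies to $G'$ with $B$ replaced by $C_{n_0}$, and since $H_1$, $H_2$ and all the vertex counts are unchanged, everything cancels except those four quantities, giving
\[
W(G') - W(G) = \big[W(C_{n_0}) - W(B)\big] + \big[w_{C_{n_0}}(v_1) - w_B(v_1)\big](n_1 - 1) + \big[w_{C_{n_0}}(v_2) - w_B(v_2)\big](n_3 - 1) + \big[d_{C_{n_0}}(v_1,v_2) - d_B(v_1,v_2)\big](n_1 - 1)(n_3 - 1).
\]
I would then argue that each of the four bracketed differences is nonnegative. The first is $\ge 0$ by Proposition~\ref{prop:C_n}, with equality exactly when $B$ is a cycle. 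The middle two are $\ge 0$ because, $B$ being $2$-connected, $w_B(v_i) \le \langle {\bf 2}_{n_0}\rangle = w_{C_{n_0}}(v_i)$, as observed just before Proposition~\ref{prop:C_n}. For the last bracket, $v_1$ and $v_2$ are opposite in the new cycle, so $d_{C_{n_0}}(v_1,v_2) = \lfloor n_0/2\rfloor$, and I need $d_B(v_1,v_2) \le \lfloor n_0/2\rfloor$. Finally, because $B$ is traversal, each of $v_1, v_2$ has a neighbour outside $B$, so $n_1 - 1 \ge 1$ and $n_3 - 1 \ge 1$, i.e.\ all coefficients are positive.

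The main obstacle---the one genuinely new ingredient beyond Lemma~\ref{lem:terminal-blocks}---is the diameter bound $d_B(v_1,v_2) \le \lfloor n_0/2\rfloor$ for a $2$-connected graph on $n_0$ vertices, together with its equality case. I would prove it via Menger's theorem: two internally disjoint $v_1$--$v_2$ paths form a cycle on at most $n_0$ vertices, so the shorter of the two, and hence $d_B(v_1,v_2)$, has length at most $\lfloor n_0/2\rfloor$. With this in hand I split into the two hypothesized cases. If $B$ is not a cycle, the first bracket is strictly positive, forcing $W(G') > W(G)$. If $B$ is a cycle but $v_1, v_2$ are not opposite in it, then $d_B(v_1,v_2) < \lfloor n_0/2\rfloor$ by the definition of opposite, so the last bracket is strictly positive while its coefficient $(n_1-1)(n_3-1)$ is positive; in either case $W(G') > W(G)$.
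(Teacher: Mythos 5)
Your proof is correct and takes essentially the same approach as the paper: the same decomposition of $G$ into $B$ and the two pieces hanging at $v_1$ and $v_2$, the same four-term expression for $W(G')-W(G)$ obtained from Proposition~\ref{prop:W2}, and the same nonnegativity and strictness arguments. The only (minor) divergence is in justifying $d_B(v_1,v_2)\le\lfloor n_0/2\rfloor$: the paper deduces it from the distance-vector observation that $d_B(v)_i\ge 2$ for all $i<e_B(v)$ in a $2$-connected graph, while you invoke Menger's theorem and the shorter of two internally disjoint paths; both are valid.
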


\begin{proof}
Denote by $G_1$ and $G_2$ the subgraphs of $G$ attached to $B$ at $v_1$ and
$v_2$, respectively, such that $E(G)=E(G_1)\cup E(B)\cup E(G_2)$.
By Proposition~{\ref{prop:W2}} we have
\begin{align*}
&W(G')-W(G)=\big(W(C_{n_0})-W(B)\big)
+\big(w_{C_{n_0}}(v_1)-w_B(v_1)\big)\cdot(n_1-1)+\\
&\big(w_{C_{n_0}}(v_2)-w_B(v_2)\big)\cdot(n_2-1)
+\big(d_{C_{n_0}}(v_1,v_2)-d_B(v_1,v_2)\big)\cdot(n_1-1)\cdot(n_2-1).
\end{align*}
By Proposition~{\ref{prop:C_n}} we have $W(C_{n_0})-W(B)\ge 0$ and
equality holds if and only if $B$ is a cycle.
By Proposition~{\ref{prop:C_n}} we have also
$w_{C_{n_0}}(v_1)-w_B(v_1)\ge 0$ and $w_{C_{n_0}}(v_2)-w_B(v_2)\ge 0$.
Finally, since every vertex $v$ in a $2$-connected graph $H$ satisfies
$e_H(v)\le\big\lfloor\frac{|V(H)|}2\big\rfloor$ (recall that for every
$i<e_H(v)$ we have $d_H(v)_i\ge 2$), we have
$d_B(v_1,v_2)\le\lfloor\frac{n_0}2\rfloor=d_{C_{n_0}}(v_1,v_2)$.
Hence, all the terms on the right hand side of the equality for $W(G')-W(G)$
are nonnegative and they are all zeros if and only if $B=C_{n_0}$ and
$d_B(v_1,v_2)=\lfloor\frac{n_0}2\rfloor$.
\end{proof}

Next lemma gives a condition for extremal graphs.

\begin{lemma}
\label{lem:distance-2-vertices}
Let $G$ be a graph with at least $3$ blocks and let $G_1$ and $G_2$ be
two terminal cycles of $G$ with attachment vertices $v_1$ and $v_2$,
respectively.
Let $u_i$ be a vertex opposite to $v_i$ in $G_i$ for $i\in\{1,2\}$.
Denote by $G'$ (resp.~$G''$) a graph obtained from $G$ by removing the
block $G_1$ (resp.~$G_2$) and attaching it to $u_2$ (resp.~$u_1$).
Suppose that $W(G)\ge W(G')$ and $W(G)\ge W(G'')$.
Then $d_{G}(u_1,u_2)\ge\frac{n-1}2$.
\end{lemma}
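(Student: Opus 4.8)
The plan is to convert the two comparison hypotheses into lower bounds on the ``internal'' distance $\delta:=d_G(v_1,v_2)$, and then to use the identity $d_G(u_1,u_2)=\lfloor n_1/2\rfloor+\lfloor n_2/2\rfloor+\delta$, where $n_i=|V(G_i)|$. First I would set up notation: let $H$ be the connected union of the blocks of $G$ other than $G_1$ and $G_2$, arranged so that $E(G)=E(G_1)\cup E(H)\cup E(G_2)$ with $V(G_1)\cap V(H)=\{v_1\}$ and $V(H)\cap V(G_2)=\{v_2\}$. Write $m=|V(H)|$, so that $n=n_1+n_2+m-2$ and $m\ge 2$ (since $G$ has at least three blocks). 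Because $G_1$ and $G_2$ are terminal, every $u_1$--$u_2$ path is forced through $v_1$ and through $v_2$, and since $u_i$ is opposite to $v_i$ we get $d_G(u_1,u_2)=d_{G_1}(u_1,v_1)+d_G(v_1,v_2)+d_{G_2}(v_2,u_2)=\lfloor n_1/2\rfloor+\lfloor n_2/2\rfloor+\delta$, while $d_G(v_1,v_2)=d_H(v_1,v_2)=\delta$.

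Next I would compute the two Wiener differences. Let $R$ be the graph obtained from $G$ by deleting the interior of $G_1$, so that $R=H\cup G_2$ (meeting at $v_2$), $G=G_1\circ_{v_1}R$ and $G'=G_1\circ_{u_2}R$. Proposition~\ref{prop:W1} applied to both gives $W(G)-W(G')=\big(w_R(v_1)-w_R(u_2)\big)\cdot(n_1-1)$, so that $W(G)\ge W(G')$ is equivalent to $w_R(v_1)\ge w_R(u_2)$. I would then evaluate $w_R(v_1)$ and $w_R(u_2)$ by splitting each transmission over $V(H)$ and $V(G_2)$: distances from $u_2$ to $H$ pass through $v_2$ and cost an extra $\lfloor n_2/2\rfloor$ each, while distances from $v_1$ to $G_2\setminus\{v_2\}$ pass through $v_2$ and cost an extra $\delta$ each. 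Crucially, since all vertices of the cycle $G_2$ have equal transmission (Proposition~\ref{prop:C_n}), the terms $w_{G_2}(v_2)$ and $w_{G_2}(u_2)$ cancel, leaving
$$w_H(v_1)-w_H(v_2)+(n_2-1)\delta-(m-1)\lfloor n_2/2\rfloor\ \ge\ 0.$$
The symmetric computation with $S=H\cup G_1$ turns $W(G)\ge W(G'')$ into $w_H(v_2)-w_H(v_1)+(n_1-1)\delta-(m-1)\lfloor n_1/2\rfloor\ge 0$.

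The key step is to add these two inequalities: the term $w_H(v_1)-w_H(v_2)$, over which we have no individual control, cancels, yielding $(n_1+n_2-2)\,\delta\ge (m-1)\big(\lfloor n_1/2\rfloor+\lfloor n_2/2\rfloor\big)$. Writing $A=\lfloor n_1/2\rfloor+\lfloor n_2/2\rfloor$ and recalling $n-1=n_1+n_2+m-3$, this gives
$$d_G(u_1,u_2)=A+\delta\ \ge\ A+\frac{(m-1)A}{n_1+n_2-2}\ =\ A\cdot\frac{n-1}{n_1+n_2-2}.$$
I would then finish by the elementary bound $2A=2\lfloor n_1/2\rfloor+2\lfloor n_2/2\rfloor\ge (n_1-1)+(n_2-1)=n_1+n_2-2$, so that $A/(n_1+n_2-2)\ge\tfrac12$ and hence $d_G(u_1,u_2)\ge\frac{n-1}{2}$, as required.

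I expect the main obstacle to be twofold. The routine-but-delicate part is the bookkeeping in expanding $w_R(v_1)$ and $w_R(u_2)$ over the two pieces of $R$, and recognizing that the cycle-symmetry of $G_2$ makes its internal transmissions drop out (this is exactly where the hypothesis ``$G_1,G_2$ are cycles'' is used). The genuinely essential idea, however, is that neither inequality alone controls $\delta$, because each carries the uncontrolled difference $w_H(v_1)-w_H(v_2)$; only after summing the two symmetric inequalities does this term vanish, and it is precisely this cancellation that produces a clean lower bound on $\delta$ strong enough to reach $\frac{n-1}{2}$.
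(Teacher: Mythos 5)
Your proof is correct and follows essentially the same route as the paper's: you derive the same two inequalities from $W(G)\ge W(G')$ and $W(G)\ge W(G'')$, sum them so that the uncontrollable term $w_H(v_1)-w_H(v_2)$ cancels, and finish with $\lfloor n_i/2\rfloor\ge(n_i-1)/2$. The only cosmetic difference is that you expand transmissions by hand after applying Proposition~\ref{prop:W1}, whereas the paper invokes Proposition~\ref{prop:W2} directly on the three-part decomposition; the resulting inequalities and the final algebra are the same.
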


\begin{proof}
Let $G_0$ be the graph obtained from $G$ by removing the cycles $G_1$
and $G_2$, such that $E(G)=E(G_1)\cup E(G_0)\cup E(G_2)$.
Observe that $G_0$ does not need to be a single block, but it is
a connected union of blocks.
Anyway, $G=G_1\circ_{v_1}G_0\circ_{v_2}G_2$,
$G'=G_0\circ_{v_2}G_2\circ_{u_2}G_1$ and
$G''=G_2\circ_{u_1}G_1\circ_{v_1}G_0$.
By Proposition~{\ref{prop:W2}} we have
\begin{align*}
W(G)-W(G')&=\big(w_{G_2}(v_2)-w_{G_2}(u_2)\big)\cdot(n_1-1)
+\big(w_{G_0}(v_1)-w_{G_0}(v_2)\big)\cdot(n_1-1)\\
&+d_{G_0}(v_1,v_2)\cdot(n_1-1)\cdot(n_2-1)
-d_{G_2}(v_2,u_2)\cdot(n_1-1)\cdot(n_0-1)
\end{align*}
where $w_{G_2}(v_2)=w_{G_2}(u_2)$.
Since $W(G)\ge W(G')$ and $n_1\ge 2$, we get
$$
w_{G_0}(v_1)-w_{G_0}(v_2)
+d_{G_0}(v_1,v_2)\cdot(n_2-1)
-d_{G_2}(v_2,u_2)\cdot(n_0-1)\ge 0.
$$
Analogously, from $W(G)-W(G'')\ge 0$ we get
$$
w_{G_0}(v_2)-w_{G_0}(v_1)
+d_{G_0}(v_1,v_2)\cdot(n_1-1)
-d_{G_1}(v_1,u_1)\cdot(n_0-1)\ge 0
$$
and summing the last two inequalities we obtain
$$
d_G(v_1,v_2)\cdot(n_1+n_2-2)
-\big(d_{G_2}(v_2,u_2)+d_{G_1}(v_1,u_1)\big)\cdot(n_0-1)\ge 0.
$$
Now since $v_i$ and $u_i$ are opposite in $G_i$ for $i\in\{1, 2\}$, we have
$$
d_G(v_1,u_1)+d_G(v_2,u_2)
=\Big\lfloor\frac{n_1}2\Big\rfloor+\Big\lfloor\frac{n_2}2\Big\rfloor
\ge \frac{n_1-1}2+\frac{n_2-1}2=\frac{n_1+n_2-2}2.
$$
Thus we obtain $d_G(v_1,v_2)\ge (n_0-1)/2$ and consequently
\begin{align*}
d_G(u_1,u_2)&=d_G(u_1,v_1)+d_G(v_1,v_2)+d_G(v_2,u_2)\\
&\ge\frac{n_1-1}2+\frac{n_0-1}2+\frac{n_2-1}2
=\frac{n-1}2
\end{align*}
since $n=n_1+n_0+n_2-2$.
\end{proof}

Let $n=tk+1$.
Take $k$ paths of length $t$ (i.e.~on $t+1$ vertices), on each path choose
one endvertex, and identify these endvertices.
We denote by $R^k_n$ the resulting graph.
Observe that $R^k_n$ has $n$ vertices and is homeomorphic to the star
$K_{1,k}$.
In \cite[Theorem~3]{KS} we have the following statement.

\begin{theorem}
\label{thm:k-dist}
Let $G$ be a connected graph on $n$ vertices.
Then for every $k$-tuple $u_1,u_2,\dots,u_k$ of its vertices, $3\le k<n$,
there are two, say $u_i$ and $u_j$ where $1\le i<j\le k$, such that
$$
d_G(u_i,u_j)\le\frac{2n-2}k.
$$
Moreover, if
$$
\min_{1\le i<j\le k} d_G(u_i,u_j)=\frac{2n-2}k
$$
then $n\equiv 1\pmod k$, the graph is $R^k_n$ and $u_1,u_2,\dots,u_k$ are the
endvertices of $R^k_n$.
\end{theorem}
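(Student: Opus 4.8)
The plan is to reduce everything to trees and then extract a spanning spider in the equality case. First I would fix a spanning tree $T$ of $G$; since $d_G(x,y)\le d_T(x,y)$ for all $x,y$, it suffices to bound $\min_{i<j}d_T(u_i,u_j)$ from above. Let $S\subseteq T$ be the minimal subtree of $T$ containing $u_1,\dots,u_k$ (all its leaves lie among the $u_i$), and put $m=|E(S)|\le n-1$. Taking a closed Euler tour of $S$, which traverses each edge exactly twice and so has length $2m$, the terminals appear in some cyclic order $w_1,\dots,w_k$; as the stretch of the tour between two consecutive terminals is a walk from $w_i$ to $w_{i+1}$, the triangle inequality gives $\sum_{i=1}^{k} d_T(w_i,w_{i+1})\le 2m\le 2(n-1)$, indices taken mod $k$. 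Averaging over the $k$ cyclic pairs produces a consecutive pair with $d_T(w_i,w_{i+1})\le (2n-2)/k$, and since this pair lies among the $u_i$ the inequality follows.

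For the equality case I would set $D:=(2n-2)/k$ and assume $\min_{i<j}d_G(u_i,u_j)=D$. Fixing a spanning tree $T$, we have $d_T\ge d_G\ge D$ on all terminal pairs, so every cyclic distance is $\ge D$ while their average is $\le D$; hence all $k$ cyclic distances equal $D$, their sum is exactly $2(n-1)$, which forces $m=n-1$ (so $S=T$ and every leaf of $T$ is a terminal) and forces each Euler-tour arc between consecutive terminals to be geodesic. In particular each edge of $T$ is crossed exactly twice by these $k$ geodesics, so the two sides of every edge form contiguous blocks in the cyclic order $w_1,\dots,w_k$.

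The heart of the argument, and the step I expect to be the main obstacle, is showing $T$ is a spider. I would first prove that no edge can have at least two terminals on each side. Indeed, if an edge $e=ab$ separates the (contiguous) terminal blocks $A=\{w_1,\dots,w_p\}$ and $B=\{w_{p+1},\dots,w_k\}$ with $p\ge 2$ and $k-p\ge 2$, then the two boundary arcs $(w_p,w_{p+1})$ and $(w_k,w_1)$ both cross $e$ and have length $D$, giving $d_T(w_1,a)+d_T(w_p,a)+d_T(w_{p+1},b)+d_T(w_k,b)=2D-2$; but $d_T(w_1,a)+d_T(w_p,a)\ge d_T(w_1,w_p)\ge D$ and likewise $d_T(w_{p+1},b)+d_T(w_k,b)\ge D$, so the left-hand side is $\ge 2D$, a contradiction. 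Consequently, were there two vertices of degree $\ge 3$, any edge on the path joining them would have such a separation, which is impossible; thus $T$ has at most one branch vertex, and with $k\ge 3$ leaves this makes $T$ a spider whose centre has degree $k$. Writing $t_i$ for the leg lengths, consecutiveness gives $t_i+t_{i+1}=D$ for all $i$, and combining these with $d_T=t_a+t_b\ge D$ for non-consecutive legs forces every $t_i$ equal to $t=(n-1)/k$; integrality of $t$ then yields $n\equiv 1\pmod k$.

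Finally I would lift the spanning-tree spider to $G$ itself. Since $G\supseteq T=R^k_n$, any chord not in $T$ would shorten a path from some leaf to the centre, and hence reduce some leaf-to-leaf distance below $2t=D$, contradicting $\min_{i<j}d_G(u_i,u_j)=D$. Therefore $G$ has no extra edges, $G=R^k_n$, and the $u_i$ are exactly its endvertices. In this outline the reduction to trees and the Euler-tour averaging are routine; the real work is the spider characterisation—ruling out a second branch vertex via the short contradiction above and then equalising the leg lengths.
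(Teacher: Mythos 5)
A preliminary remark on comparison: the paper does not prove Theorem~\ref{thm:k-dist} at all --- it imports the statement from the submitted manuscript \cite{KS} --- so your argument can only be judged on its own merits, not against an internal proof. Your route (pass to a spanning tree $T$, double the edges of the Steiner subtree $S$ of the terminals, follow the closed Euler tour, and average the $k$ consecutive-terminal distances) is sound and proves the inequality. Your equality analysis is also correct up to a point: it forces $S=T$, every leaf of $T$ to be a terminal, every consecutive Euler-tour arc to be a geodesic of length exactly $D=(2n-2)/k$, and, via your separation claim (no edge with at least two terminals on each side), at most one branch vertex in $T$. The leg-equalisation and the final step $G=T$ are correct as well, though your justification of the latter is slightly off: a chord joining the midpoints of two legs of $R^k_n$ shortens no leaf-to-centre path, yet it does create a leaf-to-leaf shortcut of length less than $D$, which is what the argument actually needs and what is true for every chord.

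The genuine gap is the clause ``with $k\ge 3$ leaves this makes $T$ a spider whose centre has degree $k$''. You have shown that every leaf is a terminal, not the converse, so at this point the number of leaves of $T$ could be smaller than $k$. Two configurations survive everything you have established: (i) $k=3$ and $T$ is a path with one terminal internal --- your separation claim needs at least two terminals on \emph{both} sides of an edge, and here one side always carries exactly one; (ii) $T$ is a spider with $k-1$ legs whose centre is itself a terminal --- again every edge has exactly one terminal on its leaf side. (For $k\ge 4$ a path, and internal terminals on legs of a spider, are indeed excluded by your separation claim; these two cases are precisely what slips through.) Unless they are excluded, you may not write $t_i+t_{i+1}=D$, and ``centre of degree $k$'' is unjustified. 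The repair is short and uses tools you already have: in either configuration all pairwise terminal distances are at least $D$, which forces $T$ to contain at least $1+(k-1)D=1+(k-1)(2n-2)/k>n$ vertices since $k\ge 3$, a contradiction; alternatively, in either configuration some Euler-tour arc has length at least $2D>D$. With such an argument inserted, the terminals coincide with the leaves, $T$ is a $k$-legged spider, and the rest of your proof goes through.
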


Using Theorem~{\ref{thm:k-dist}} and Lemma~{\ref{lem:distance-2-vertices}}
we prove the following statement.

\begin{lemma}
\label{lema:<4}
Let $n>p$.
Let $G$ be a graph on $n$ vertices with $p$ blocks which has the maximum
Wiener index.
Then $G$ has at most three terminal blocks.
\end{lemma}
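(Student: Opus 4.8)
The plan is to argue by contradiction: suppose the extremal graph $G$ has at least four terminal blocks. By Lemma~\ref{lem:terminal-blocks}, each terminal block of $G$ is either a cycle or an edge, so in particular we may treat every terminal block as a cycle (allowing $C_2=K_2$). Let me think about how to extract a contradiction from having four such blocks.

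First I would select four terminal blocks $G_1,G_2,G_3,G_4$ with attachment vertices $v_1,v_2,v_3,v_4$, and in each $G_i$ pick a vertex $u_i$ opposite to $v_i$. The key idea is that extremality forces a strong separation condition on these vertices. Since $G$ is extremal, moving any one terminal cycle to the antipode of another terminal cycle cannot increase the Wiener index; thus for every pair $(i,j)$ the hypotheses of Lemma~\ref{lem:distance-2-vertices} are satisfied (the moved graph has the same number of blocks), and that lemma yields $d_G(u_i,u_j)\ge\frac{n-1}{2}$ for all $i\ne j$. This is the crucial input, and producing it cleanly is where I expect the argument to turn.

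Next I would feed the four vertices $u_1,u_2,u_3,u_4$ into Theorem~\ref{thm:k-dist} with $k=4$. The theorem guarantees some pair with $d_G(u_i,u_j)\le\frac{2n-2}{4}=\frac{n-1}{2}$. Combined with the lower bound $\frac{n-1}{2}$ from Lemma~\ref{lem:distance-2-vertices}, this forces $\min_{i<j}d_G(u_i,u_j)=\frac{n-1}{2}=\frac{2n-2}{4}$, so the equality case of Theorem~\ref{thm:k-dist} applies: we must have $n\equiv 1\pmod 4$, the graph $G$ must be exactly $R^4_n$, and $u_1,\dots,u_4$ must be its four endvertices. The point of this step is that the equality case is extremely rigid — it pins down $G$ completely.

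The final step is to derive the contradiction from $G=R^4_n$. The obstacle I anticipate is that $R^4_n$ is a legitimate graph with four terminal blocks (four paths meeting at a center), so it does not contradict the \emph{existence} of four terminal blocks directly; instead I must contradict \emph{extremality}. Here I would exhibit an explicit modification that strictly increases the Wiener index: for instance, detach one of the four pendant paths and reattach it at the far end $u_j$ of another pendant path, turning the four-legged star into a three-legged configuration. Using Proposition~\ref{prop:W2} one computes that this strictly increases $W$, contradicting the maximality of $G$. (Since the blocks of $R^4_n$ are all edges, $p=n-1$, and one checks the rearrangement stays within $n$ vertices and $p$ blocks.) This contradiction shows $G$ cannot have four or more terminal blocks, so it has at most three, as claimed.
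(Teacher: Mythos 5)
Your proof is correct and follows essentially the same route as the paper: Lemma~\ref{lem:terminal-blocks} makes all four terminal blocks cycles (or $C_2$), extremality feeds into Lemma~\ref{lem:distance-2-vertices} to give $d_G(u_i,u_j)\ge\frac{n-1}{2}$ for every pair, Theorem~\ref{thm:k-dist} with $k=4$ then forces equality in its bound and hence $G=R^4_n$, and finally $R^4_n$ is shown not to be extremal. The paper organizes the dichotomy slightly differently --- it takes the minimizing pair and splits into the cases $d<\frac{n-1}{2}$ (handled by the contrapositive of Lemma~\ref{lem:distance-2-vertices}) and $d=\frac{n-1}{2}$ --- but this is the same argument. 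The only genuine divergence is the endgame: the paper disposes of $R^4_n$ by noting that it is a tree (all its blocks are edges, so $p=n-1$) and citing the classical fact that $P_n$ is the unique tree maximizing the Wiener index, whereas you rule it out with an explicit local move whose computation you leave implicit. That computation does work out, and Proposition~\ref{prop:W1} already suffices: write $n=4t+1$, let $B$ be the detached leg (a path on $t+1$ vertices, attached by an endvertex in both graphs) and $A$ the three-legged remainder with center $z$; then $W(G')-W(G)=t\,\bigl(w_A(u_2)-w_A(z)\bigr)=t\cdot 2t^2=2t^3>0$. So your finish is more self-contained (no appeal to the known extremal result for trees), at the cost of a small calculation the paper avoids by quotation; everything else matches the paper's proof.
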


\begin{proof}
By way of contradiction, suppose that $G$ has at least four terminal blocks,
say $B_1$, $B_2$, $B_3$ and $B_4$.
By Lemma~{\ref{lem:terminal-blocks}} we know that each of these blocks is
either a cycle or $K_2$.
Let $v_i$ be the unique attachment vertex of $B_i$ and let $u_i$ be a
vertex opposite to $v_i$ in $B_i$, $1\le i\le 4$.
Denote
$$
d=\min_{1\le i<j\le 4} d_G(u_i,u_j)
$$
and assume that this minimum is attained by the pair $u_1,u_2$.
By Theorem~{\ref{thm:k-dist}} we know that $d\le\frac{n-1}2$.
We distinguish two cases.

{\bf Case~1:}
$d<\frac{n-1}2$.
Denote $G_1=B_1$ and $G_2=B_2$.
Now construct $G'$ and $G''$ by reattaching of $G_1$ and $G_2$ as in
Lemma~{\ref{lem:distance-2-vertices}}.
Since $d_G(u_1,u_2)=d<\frac{n-1}2$, either $W(G)<W(G')$ or $W(G)<W(G'')$.
Since all $G$, $G'$ and $G''$ have $n$ vertices and $p$ blocks, we get a
contradiction.

{\bf Case~2:}
$d=\frac{n-1}2$.
By Theorem~{\ref{thm:k-dist}}, in this case $G$ is $R^4_n$, and so $p=n-1$.
It is well-known that among trees on $n$ vertices, $P_n$ is the unique graph
with the maximum Wiener index.
So $W(P_n)>W(R^4_n)$, a contradiction.
\end{proof}

Now we prove some results useful for sequences of traversal blocks.
The following theorem was proved in~\cite{GCR14}.

\begin{theorem}
\label{thm:C.C}
For every $n\notin \{7,9\}$, the graph $C_{n-2}\circ C_3$ has the maximal
Wiener index among the graphs from the family $\{C_{n-r+1}\circ C_r : r\ge 3,\ n-r\ge 2\}$.
Moreover for $n=7$ and $n=9$, it holds $W(C_4\circ C_4)> W(C_5\circ C_3)$ and
$W(C_6\circ C_4) > W(C_7\circ C_3)> W(C_5\circ C_5)$.
\end{theorem}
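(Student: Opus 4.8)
The plan is to reduce the statement to a single explicit inequality in one integer variable. Set $a=n-r+1$, so that $s:=a+r=n+1$ is fixed and $C_{n-r+1}\circ C_r=C_a\circ C_r$ is obtained by identifying a vertex $v$ of $C_a$ with a vertex of $C_r$. By Proposition~\ref{prop:W1} and the cycle formulas of Proposition~\ref{prop:C_n},
$$
F(r):=W(C_a\circ C_r)=W(C_a)+W(C_r)+w_{C_a}(v)(r-1)+w_{C_r}(v)(a-1),
$$
which I treat as a function of $r\in\{3,\dots,n-2\}$; note $F(r)=F(n+1-r)$ since $W$ is symmetric in $a$ and $r$, and $r=3$ gives precisely $C_{n-2}\circ C_3$. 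First I would substitute the ``rounding-free'' values $W(C_m)=\tfrac{m^3}{8}$ and $w_{C_m}(v)=\tfrac{m^2}{4}$, recording separately the parity errors $\varepsilon_W(m)=\tfrac{m}{8}$ and $\varepsilon_w(m)=\tfrac14$, which occur only for odd $m$. Using $a^3+r^3=s^3-3sq$ and $a^2+r^2=s^2-2q$ with $q:=ar$, the rounding-free part of $F$ collapses to $\tfrac{s^3}{8}-\tfrac{s^2}{4}+q\cdot\tfrac{4-s}{8}$, all of whose $r$-dependence lies in $q=r(s-r)$.

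The key observation is now immediate: since $s=n+1>4$ the coefficient $\tfrac{4-s}{8}$ is negative, so up to parity corrections $F$ is an affine \emph{decreasing} function of the product $q=ar$ of the two cycle lengths; and $q=r(n+1-r)$ is minimised over the admissible range exactly at the extreme split $r=3$ (equivalently $r=n-2$). Quantitatively, writing $M(r)=q\cdot\tfrac{4-s}{8}$ for the rounding-free $r$-dependent part, one gets the clean factorisation
$$
M(3)-M(r)=\frac{n-3}{8}(r-3)(n-2-r),
$$
which is nonnegative on $3\le r\le n-2$ and vanishes only at the endpoints. It then remains to control the corrections. Collecting them into $E(a,r)=\varepsilon_W(a)+\varepsilon_W(r)+\varepsilon_w(a)(r-1)+\varepsilon_w(r)(a-1)$, a check of the four parity cases yields the uniform bounds $0\le E(a,r)\le\tfrac{3n-1}{8}$, the upper value being attained exactly when both $a$ and $r$ are odd. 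Hence
$$
F(3)-F(r)\ge\frac{n-3}{8}(r-3)(n-2-r)-\frac{3n-1}{8}.
$$

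For $r\in\{4,\dots,n-3\}$ the product $(r-3)(n-2-r)$ is smallest at the two endpoints, where it equals $n-6$, so $F(3)-F(r)\ge\frac{(n-3)(n-6)-(3n-1)}{8}=\frac{n^2-12n+19}{8}$, which is strictly positive as soon as $n\ge 11$. Combined with $F(n-2)=F(3)$, this proves that for every $n\ge 11$ the unique maximiser is $r=3$, i.e.\ $C_{n-2}\circ C_3$. The finitely many remaining values $5\le n\le 10$ I would settle by directly tabulating $F(r)$ for each admissible split; this is exactly where the two exceptions appear, giving $W(C_4\circ C_4)>W(C_5\circ C_3)$ for $n=7$ and $W(C_6\circ C_4)>W(C_7\circ C_3)>W(C_5\circ C_5)$ for $n=9$, while $n\in\{5,6,8,10\}$ confirm $r=3$.

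I expect the main obstacle to be precisely the interface between the asymptotic estimate and the small cases. The leading term favours $r=3$ by a margin growing like $n^2$, but the parity corrections are of order $n$, and for odd $n$ they penalise the odd--odd split $r=3$ by the full $\tfrac{3n-1}{8}$ while a competing split into two even cycles pays nothing; for $n=7$ and $n=9$ this order-$n$ effect beats the still-small quadratic gap, which is why genuine exceptions occur there and nowhere else. Making the bound on $E$ tight enough that the threshold lands exactly at $n=11$ (rather than leaving a larger band of unchecked cases), together with verifying the handful of small instances, is the delicate part; the rest is bookkeeping.
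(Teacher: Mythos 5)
Your proposal is correct, but there is no internal proof to measure it against: the paper does not prove Theorem~\ref{thm:C.C} at all, it simply imports the statement from~\cite{GCR14}. Your argument is therefore a self-contained alternative resting only on Proposition~\ref{prop:W1} and Proposition~\ref{prop:C_n}, both of which the paper does state. I checked the key steps and they hold: with $s=n+1$ and $q=ar$, the rounding-free part of $W(C_a\circ C_r)$ does collapse to $\tfrac{s^3}{8}-\tfrac{s^2}{4}+q\cdot\tfrac{4-s}{8}$; the factorisation $M(3)-M(r)=\tfrac{n-3}{8}(r-3)(n-2-r)$ is exact; the parity-error term satisfies $0\le E\le\tfrac{3n-1}{8}$ (equal to $\tfrac{3n-1}{8}$ for odd--odd splits and to at most $\tfrac{2n-3}{8}$ in the mixed cases), so $F(3)-F(r)\ge\tfrac{(n-3)(n-6)-(3n-1)}{8}=\tfrac{n^2-12n+19}{8}>0$ for all interior $r$ once $n\ge 11$. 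The finite check also comes out as you claim: $W(C_4\circ C_4)=40>38=W(C_5\circ C_3)$, $W(C_6\circ C_4)=82>81=W(C_7\circ C_3)>78=W(C_5\circ C_5)$, while $r=3$ wins for $n\in\{5,6,8,10\}$ (e.g.\ $58>57$ at $n=8$ and $113>110>108$ at $n=10$), in agreement with the values the paper itself quotes inside the proof of Lemma~\ref{lem:KC}. Even your diagnosis of the exceptions is quantitatively right: for odd $n$ the split $r=3$ pays the full penalty $\tfrac{3n-1}{8}$ while $r=4$ pays nothing, and the rounding-free gap $\tfrac{(n-3)(n-6)}{8}$ first overtakes that difference at $n=11$. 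The only caveat is presentational: parts of the write-up are phrased as a plan (``I would tabulate\dots''), so a finished version must actually display the small-$n$ table; since its entries are exactly as you assert, this is a matter of exposition, not a gap. What your route buys, compared with the paper's reliance on an external reference, is a short, verifiable proof whose single cleanly isolated inequality makes the two exceptional values $n=7,9$ appear as a transparent competition between a quadratic main term and a linear parity correction.
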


We extend Lemma~{\ref{thm:C.C}} to blocks of size $2$.
(Recall that we denote the complete graph on $2$ vertices by $C_2$.)

\begin{lemma}
\label{lem:KC}
For every $n\geq 4$, among the graphs on $n$ vertices with exactly two
blocks, the maximal Wiener index is attained by $C_{n-1}\circ C_2$.
\end{lemma}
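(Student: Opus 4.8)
The plan is to reduce an arbitrary maximizer among two-block graphs to a join of two cycles, and then to beat every such join with the proposed extremizer $C_{n-1}\circ C_2$ by an explicit computation of the Wiener index.

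Let $G$ be a graph on $n$ vertices with exactly two blocks and of maximum Wiener index. Its two blocks share the unique cut-vertex $v$, and each block has $v$ as its only attachment vertex, hence is a terminal subgraph. If some block were not a cycle and had at least three vertices, then by Lemma~\ref{lem:terminal-blocks} replacing it by a cycle on the same number of vertices would strictly increase the Wiener index while preserving both $n$ and the number of blocks, contradicting maximality. Thus both blocks are cycles or copies of $K_2=C_2$, so $G=C_r\circ_v C_s$ with $r,s\ge 2$ and $r+s=n+1$; and for $n\ge 4$ the blocks cannot both be edges, since two edges sharing a vertex give $P_3$ on three vertices.

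If $r=2$ or $s=2$ then $G=C_{n-1}\circ C_2$ and we are done, so suppose $r,s\ge 3$. In this case $G$ is a genuine two-cycle graph of the kind treated in Theorem~\ref{thm:C.C}, which tells us that the maximum Wiener index over such graphs is attained by $C_{n-2}\circ C_3$ when $n\notin\{7,9\}$, by $C_4\circ C_4$ when $n=7$, and by $C_6\circ C_4$ when $n=9$. It therefore suffices to show that $C_{n-1}\circ C_2$ has strictly larger Wiener index than each of these; this contradicts the assumed maximality of $G$ in the subcase $r,s\ge 3$, forcing $G=C_{n-1}\circ C_2$.

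The remaining comparison is carried out with Proposition~\ref{prop:W1}, using the closed forms for $W(C_m)$ and $w_{C_m}(v)$ from Proposition~\ref{prop:C_n} and the values $W(C_2)=w_{C_2}(v)=1$ supplied by Proposition~\ref{prop:P_n}. For $n\ge 5$ with $n\notin\{7,9\}$ one expands both $W(C_{n-1}\circ C_2)$ and $W(C_{n-2}\circ C_3)$ into explicit polynomials in $n$ and checks that the difference is positive; the two exceptional values are handled by the single inequalities $W(C_6\circ C_2)>W(C_4\circ C_4)$ and $W(C_8\circ C_2)>W(C_6\circ C_4)$, and $n=4$ is immediate since there the only two-block graph is $C_3\circ C_2$. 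The main obstacle is purely computational: the formulas of Proposition~\ref{prop:C_n} depend on the parity of the cycle length, so the inequality $W(C_{n-1}\circ C_2)>W(C_{n-2}\circ C_3)$ splits into parity subcases for $n$, each reducing to the positivity of a low-degree polynomial. Keeping this case analysis organized, rather than any conceptual difficulty, is where the care lies.
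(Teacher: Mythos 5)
Your proposal is correct and follows essentially the same route as the paper's proof: reduce to a join of two cycles (you do this explicitly via Lemma~\ref{lem:terminal-blocks}, where the paper leaves it implicit), invoke Theorem~\ref{thm:C.C} to identify the best genuine two-cycle competitor, and then beat it via the comparison $W(C_{n-1}\circ C_2)>W(C_{n-2}\circ C_3)$ for $n\ge 5$, $n\notin\{7,9\}$, together with the two exceptional checks $W(C_6\circ C_2)>W(C_4\circ C_4)$ and $W(C_8\circ C_2)>W(C_6\circ C_4)$ and the trivial case $n=4$. The only difference is that you defer the routine parity-split computation which the paper carries out in full, obtaining $W(C_{n-1}\circ C_2)-W(C_{n-2}\circ C_3)=\frac{(n-2)(n-4)}{8}$ for even $n$ and $\frac{(n-1)(n-3)}{8}+1$ for odd $n$, both positive for $n\ge 5$ as you claim.
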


\begin{proof}
For $n=4$ the graph $C_3\circ C_2$ is the unique graph with two blocks,
thus it has the largest Wiener index.
For $n\ge 5$, $n\notin\{7,9\}$, it is enough to show that
$W(C_{n-1}\circ C_2)>W(C_{n-2}\circ C_3)$, by
Theorem~{\ref{thm:C.C}}.

Using Proposition~{\ref{prop:W1}} we get the Wiener index of
$G=C_{n-1}\circ_v K_2$.
\begin{align*}
W(G)&=W(C_{n-1})+W(K_2)+w_{C_{n-1}}(v)\cdot 1+w_{K_2}(v)\cdot(n-2)\\
&=\tfrac{n-1}2 \langle{\bf 2}_{n-1}\rangle+1
+\langle{\bf 2}_{n-1}\rangle\cdot 1+1\cdot(n-2).
\end{align*}
In $G'=C_{n-2}\circ_u C_{3}$ we can also use Proposition~{\ref{prop:W1}} to
evaluate the Wiener index.
\begin{align*}
W(G')&=W(C_{n-2})+W(C_3)+w_{C_{n-2}}(u)\cdot 2+w_{C_3}(u)\cdot(n-3)\\
&=\tfrac{n-2}2 \langle{\bf 2}_{n-2}\rangle+3
+\langle{\bf 2}_{n-2}\rangle\cdot 2+2\cdot (n-3).
\end{align*}
Hence, using Proposition~{\ref{prop:C_n}} we get
\begin{align*}
W(G)-W(G')&=\tfrac{n+1}{2}\langle{\bf 2}_{n-1}\rangle
-\tfrac{n+2}{2}\langle{\bf 2}_{n-2}\rangle-n+2\\
&=
\begin{cases}
\frac{(n-2)(n-4)}8 \qquad\ \ \quad\mbox{if $n$ is even;}\\
\frac{(n-1)(n-3)}8+1 \qquad\mbox{if $n$ is odd.}
\end{cases}
\end{align*}
Since $n\ge 5$, in both cases we get $W(G)>W(G')$.

By Theorem~{\ref{thm:C.C}}, for $n=7$ and $n=9$ it suffices to show
that $W(C_{n-1}\circ C_2)>W(C_{n-3}\circ C_4)$.
Direct computation gives $W(C_4\circ C_4)=40$, $W(C_6\circ C_2)=42$,
$W(C_6\circ C_4)=82$ and $W(C_8\circ C_2)=88$, which completes the proof.
\end{proof}

Using Lemma~{\ref{lem:KC}} we prove the following statement.
Here we allow the smaller end-block to be just a single vertex,
i.e.~$|V(G_0)|=1$, see below.

\begin{lemma}
\label{lem:replacing-two-cycles}
Let $G=G_0\circ_{v_1} G_1\circ_v G_2\circ_{v_2} G_3$, where $G_1$ and $G_2$
are cycles, $v_1$ and $v$ are antipodal in $G_1$, and $v$ and $v_2$ are
antipodal in $G_2$.
Let $k=n_1+n_2-1$, $n_0\le n_3$ and $n_3\geq 2$.
Then $G$ has maximal Wiener index if and only if
\begin{enumerate}
\item
$n_1=k-1$ and $n_2=2$, or 
\item
$n_1=2$, $n_2=k-1$ and $n_0=n_3$.
\end{enumerate}
\end{lemma}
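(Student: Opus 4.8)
The plan is to apply Proposition~\ref{prop:W2} to the decomposition of $G$ into the four pieces $G_0,G_1,G_2,G_3$, and then to optimise over the way the $k=n_1+n_2-1$ vertices of the two middle cycles are split between $G_1$ and $G_2$, keeping $G_0$ and $G_3$ (hence $n_0,n_3$) fixed. Since $v_1,v$ are antipodal in $G_1$ and $v,v_2$ are antipodal in $G_2$, I record $d_G(v_1,v)=\lfloor n_1/2\rfloor$, $d_G(v,v_2)=\lfloor n_2/2\rfloor$ and $d_G(v_1,v_2)=\lfloor n_1/2\rfloor+\lfloor n_2/2\rfloor$, while Proposition~\ref{prop:C_n} gives $w_{G_1}(x)=\langle{\bf 2}_{n_1}\rangle$ and $w_{G_2}(x)=\langle{\bf 2}_{n_2}\rangle$ for every relevant vertex $x$, and $W(G_i)=\tfrac{n_i}{2}\langle{\bf 2}_{n_i}\rangle$ for the two cycles. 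First I would collect all terms of the resulting expression and observe that, because $n_1+n_2=k+1$ is fixed, every contribution carrying $W(G_0)$, $W(G_3)$, $w_{G_0}(v_1)$ or $w_{G_3}(v_2)$ is \emph{constant}. Hence maximising $W(G)$ reduces to maximising a function $F(n_1)$ of the single integer variable $n_1\in\{2,\dots,k-1\}$ (with $n_2=k+1-n_1$) depending only on $n_0$, $n_3$ and $k$.

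The heart of the argument is the behaviour of $F$. Substituting $\langle{\bf 2}_m\rangle\in\{m^2/4,(m^2-1)/4\}$ and $\lfloor m/2\rfloor\in\{m/2,(m-1)/2\}$, I would check that the cubic-in-$n_1$ terms cancel identically, so that $F$ is, up to parity corrections, quadratic in $n_1$. Since $n_1+n_2$ is fixed, fixing the parity of $n_1$ also fixes the parity of $n_2$, and then all the floor corrections become \emph{linear} in $n_1$; consequently the restriction of $F$ to each parity class is a genuine quadratic with leading coefficient $\tfrac{k-3}{8}$, which is positive for $k>3$. Thus $F$ is strictly convex on each parity class, and any maximiser must be an endpoint of such a class, i.e.\ $n_1\in\{2,3,k-2,k-1\}$. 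This is the step that replaces, and makes rigorous, the informal principle behind Lemma~\ref{lem:KC} that one of the two cycles should shrink to a single edge.

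It then remains to compare these few boundary values. The decisive one is
$$F(k-1)-F(2)=(n_3-n_0)\Big(k-2-\big\lfloor\tfrac{k-1}{2}\big\rfloor\Big),$$
i.e.\ the comparison of configuration~1 with configuration~2; the bracket is positive for $k\ge4$, so under the hypothesis $n_0\le n_3$ this difference is nonnegative and vanishes exactly when $n_0=n_3$. The two remaining inequalities $F(k-1)\ge F(k-2)$ and $F(k-1)\ge F(3)$ are finite explicit computations: splitting on the parity of $k$ and using Proposition~\ref{prop:C_n} to evaluate the cycle quantities, each reduces to checking that a polynomial in $k,n_0,n_3$ is nonnegative (and strictly positive away from the degenerate small cases) under the constraints $1\le n_0\le n_3$ and $n_3\ge2$. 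Together these show that $F(k-1)$ is the maximum, attained \emph{only} at configuration~1 when $n_0<n_3$ and jointly at configurations~1 and~2 when $n_0=n_3$, which is exactly the claimed characterisation; the tiny cases $k\le4$, where the candidate set degenerates, are handled by inspection.

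The main obstacle I anticipate is precisely the floor functions: they destroy the global convexity of $F$, so one cannot simply argue that the maximum sits at the two extreme endpoints $n_1=2$ and $n_1=k-1$. Coping with this forces the parity-class splitting above, and then the genuinely technical part is the pair of cross-parity boundary comparisons against $n_1=3$ and $n_1=k-2$, together with tracking strictness carefully enough to recover the exact ``if and only if'' (in particular the role of $n_0=n_3$ for configuration~2).
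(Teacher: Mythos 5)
Your proof plan is correct, but it takes a genuinely different route from the paper's. The paper sets up no one-variable optimisation: it compares each $G$ directly with the conjectured optimum $G'=G_0\circ_{v_1} C_{n_1+n_2-2}\circ_u C_2\circ_{v_2} G_3$, applying Proposition~\ref{prop:W2} with the middle unit $H=C_{n_1}\circ C_{n_2}$, respectively $H'=C_{n_1+n_2-2}\circ C_2$, so that $W(G')-W(G)$ splits into three nonnegative pieces: $W(H')-W(H)$, which is supplied by Lemma~\ref{lem:KC} (itself resting on the external Theorem~\ref{thm:C.C}, including the exceptional cases $n\in\{7,9\}$), a distance term, and a transmission term $\Delta$ that is analysed in four parity cases; tracking when all three pieces vanish gives the equivalence. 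You eliminate Lemma~\ref{lem:KC} entirely and work with the single-variable function $F$. I verified your two key computational claims: the cubic terms do cancel, and since every floor correction becomes linear once the parity of $n_1$ (hence of $n_2$) is fixed, $F$ restricted to a parity class is a quadratic with leading coefficient exactly $\frac{k-3}{8}$; moreover your decisive identity is exact, $F(k-1)-F(2)=(n_3-n_0)\bigl(k-2-\lfloor\frac{k-1}{2}\rfloor\bigr)$, because the cycle and transmission contributions are symmetric under $n_1\leftrightarrow n_2$ and only the three distance terms survive. The two comparisons you defer also close: for even $k$ one finds $8\bigl(F(k-1)-F(3)\bigr)=(k-4)(4n_3+k-6)$ and $8\bigl(F(k-1)-F(k-2)\bigr)=(k-2)(k+4n_3)-8n_0-8(k-3)\ge(k-4)(k+2)$, while for odd $k$ both differences are sums of manifestly nonnegative terms, one of which is $8n_3\ge 16$; hence $F(k-1)$ strictly beats $F(3)$ and $F(k-2)$ whenever these are genuine additional candidates. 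As for what each approach buys: the paper's proof is shorter on the page because the hardest comparison is outsourced to Lemma~\ref{lem:KC}/Theorem~\ref{thm:C.C}; yours is self-contained, uniform in $k$ (convexity makes all interior values of $n_1$ irrelevant, which is exactly where the exceptional cases of Theorem~\ref{thm:C.C} live), and specialising it to trivial end-blocks gives an independent proof of the two-cycle comparison behind Lemma~\ref{lem:KC}. The cost is that the boundary comparisons must actually be carried out with their parity splits; as written you assert rather than perform them, but since they reduce to the positive polynomials above, this is a matter of completeness rather than a gap in the approach.
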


\begin{proof}
Let $G'=G_0\circ_{v_1} C_{n_1+n_2-2}\circ_u C_2\circ_{v_2} G_3$,
where $u$ is antipodal to $v_1$ in $C_{n_1+n_2-2}$ and $u$ is antipodal to
(i.e., different from) $v_2$ in $C_2$ ($=K_2$).
Denote $H=C_{n_1}\circ C_{n_2}$ and $H'=C_{n_1+n_2-2}\circ C_2$.
By Proposition~{\ref{prop:W2}} we have
\begin{align*}
&W(G')-W(G)=\big(W(H')-W(H)\big)+
\big(w_{H'}(v_1)-w_H(v_1)\big)\cdot(n_0-1)\\
&+\big(w_{H'}(v_2)-w_H(v_2)\big)\cdot(n_3-1)
+\big(d_{H'}(v_1,v_2)-d_H(v_1,v_2)\big)\cdot(n_0-1)\cdot(n_3-1).
\end{align*}

By Lemma~{\ref{lem:KC}} we have $W(H')-W(H)\ge 0$ and equality holds if and
only if $H=H'$ (i.e., if $n_1=2$ or if $n_2=2$).
Further,
$d_{H'}(v_1,v_2)=\lfloor\frac{n_1+n_2-2}2\rfloor+\lfloor\frac 22\rfloor\ge
\lfloor\frac{n_1}2\rfloor+\lfloor\frac{n_2}2\rfloor=d_H(v_1,v_2)$, and so
the last term is nonnegative as well.
Let
$$
\Delta=\big(w_{H'}(v_1)-w_H(v_1)\big)\cdot(n_0-1)
+\big(w_{H'}(v_2)-w_H(v_2)\big)\cdot(n_3-1).
$$
We show that $\Delta\ge 0$.

If $k$ is even, we have
$w_{H'}(v_1)=\langle d_{H'}(v_1)\rangle=\langle(2,2,\ldots,2,1)\rangle$,
and $w_{H'}(v_2)=\langle d_{H'}(v_2)\rangle=\langle(1,2,2,\ldots,2)\rangle$,
where these vectors both have dimension $k/2$.
If $k$ is odd, we get
$w_{H'}(v_1)=\langle d_{H'}(v_1)\rangle=\langle(2,2,\ldots,2,1,1)\rangle$,
and $w_{H'}(v_2)=\langle d_{H'}(v_2)\rangle=\langle(1,2,2,\ldots,2,1)\rangle$,
where both these vectors have dimension $(k+1)/2$. 
To compute $w_{H}(v_1)$, $w_{H}(v_2)$ and $\Delta$ we distinguish four
cases according to the parity of $k$ and $n_1$.

{\bf Case~1:}
{\it Both $k$ and $n_1$ are even.}
Then $n_2$ is odd and
$w_H(v_1)=\langle d_H(v_1)\rangle=\langle(2,\ldots,2,1,2,\ldots,2)\rangle$,
where $d_H(v_1)$ is a vector of dimension $k/2$, such that
the $\frac{n_1}{2}$-th coordinate is $1$,
i.e., $d_H(v_1)_{n_1/2}=1$, and
$w_H(v_2)=\langle d_H(v_2)\rangle=\langle(2,\ldots,2,1)\rangle$,
where $d_H(v_2)$ is also a vector of dimension $k/2$.
So
$$
\Delta=
\big(\tfrac{n_1}2-\tfrac k2\big)(n_0-1)+\big(-1+\tfrac k2\big)(n_3-1)
=\tfrac{n_1-k}2\cdot(n_0-1)+\tfrac{k-2}2\cdot(n_3-1).
$$
This is nonnegative since $n_0\le n_3$ and $k-2\ge k-n_1$.
Moreover, $\Delta=0$ if and only if $n_0=n_3$ and $n_1=2$.

{\bf Case~2:}
{\it $k$ is even and $n_1$ is odd.}
Then $n_2$ is even and
$w_H(v_1)=\langle d_H(v_1)\rangle=\langle(2,\ldots,2,1)\rangle$,
where $d_H(v_1)$ is a vector of dimension $k/2$, and
$w_H(v_2)=\langle d_H(v_2)\rangle=\langle(2,\ldots,2,1,2,\ldots,2)\rangle$,
where $d_H(v_2)$ is also a vector of dimension $k/2$,
in which $d_H(v_2)_{n_2/2}=1$.
So
$$
\Delta=
0\cdot(n_0-1)+\big(-1+\tfrac{n_2}2\big)(n_3-1)
=\tfrac{n_2-2}2\cdot(n_3-1).
$$
This is nonnegative since $n_2-2\ge 0$.
Moreover, $\Delta=0$ if and only if $n_2=2$, since $n_3\ge 2$.

{\bf Case~3:}
{\it $k$ is odd and $n_1$ is even.}
Then $n_2$ is even and
$w_H(v_1)=\langle d_H(v_1)\rangle=\langle (2,\ldots,2,1,2,\ldots,2,1)\rangle$,
where $d_H(v_1)$ is a vector of dimension $(k+1)/2$, such that
$d_H(v_1)_{n_1/2}=1$ and
$w_H(v_2)=\langle d_H(v_2)\rangle=\langle(2,\ldots,2,1,2,\ldots,2,1)\rangle$,
where $d_H(v_2)$ is also a vector of dimension $(k+1)/2$, in which
$d_H(v_2)_{n_2/2}=1$.
Since $k-1-n_1=n_2-2$, we have
$$
\Delta=\big(\tfrac{n_1}2-\tfrac{k-1}2\big)(n_0-1)+\big(-1+\tfrac{n_2}2\big)(n_3-1)
=\tfrac{n_2-2}2\cdot(n_3-n_0).
$$
This is nonnegative since $n_2\ge 2$ and $n_3\ge n_0$.
Moreover $\Delta=0$ if and only if $n_2=2$ or $n_3=n_0$.

{\bf Case~4:}
{\it Both $k$ and $n_1$ are odd.}
Then $n_2$ is odd and
$w_H(v_1)=\langle d_H(v_1)\rangle=w_H(v_2)=\langle d_H(v_2)\rangle=\langle(2,\ldots,2)\rangle$,
where both these vectors are of dimension $(k-1)/2$.
So
$$
\Delta=\big(\tfrac{-(k-1)}2+\tfrac{k+1}2\big)(n_0-1)+\big(-1+\tfrac{k+1}2\big)(n_3-1)
=(n_0-1)+\tfrac{k-1}2\cdot(n_3-1)>0.
$$

Now combining these cases with Lemma~{\ref{lem:KC}}, which states that
$W(H')\ge W(H)$ and the equality holds if and only if $H=H'$ (see Case~3),
yields the result.
\end{proof}

In the following lemma we consider chains of traversal blocks.

\begin{lemma}
\label{lema:chains}
Let $n>p$.
Let $G$ be a graph on $n$ vertices with $p$ blocks which has the maximum
Wiener index.
Moreover, suppose that
$G=H_0\circ_{v_1}H_1\circ_{v_2}\dots\circ_{v_{\ell-1}}H_{\ell-1}\circ_{v_{\ell}}H_{\ell}$,
where $\ell\ge 2$, all $H_0,\dots,H_{\ell-1}$ are blocks and $H_{\ell}$
is a connected union of blocks.
Then $|V(H_1)|=\dots=|V(H_{\ell-2})|=2$.
Moreover, if $H_{\ell}$ is a terminal block or if
$|V(H_0\circ\dots\circ H_{\ell-2})|\le|V(H_{\ell})|$, then
$|V(H_{\ell-1})|=2$ as well.
\end{lemma}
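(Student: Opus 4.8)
The plan is to extract from Lemma~\ref{lem:replacing-two-cycles} a single orientation-independent statement about two consecutive blocks, and then to apply it along the chain. Since $G$ is extremal, Lemma~\ref{lem:terminal-blocks} forces the terminal block $H_0$ to be a cycle or $C_2$, and Lemma~\ref{lem:traversal-blocks} forces each $H_i$ with $1\le i\le\ell-1$ to be a cycle whose two attachment vertices $v_i,v_{i+1}$ are antipodal (or $C_2$). Hence every one of $H_0,\dots,H_{\ell-1}$ is a cycle with antipodal attachment vertices, which is exactly the hypothesis needed to apply Lemma~\ref{lem:replacing-two-cycles} to any two consecutive blocks. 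For such a pair $H_j,H_{j+1}$ I write $G=A\circ H_j\circ H_{j+1}\circ B$ with $A=H_0\circ\cdots\circ H_{j-1}$ and $B=H_{j+2}\circ\cdots\circ H_\ell$ the two outer parts (one of which may be a single vertex). After possibly reversing the chain so that the smaller outer part plays the role of $G_0$ (reversal preserves all antipodality hypotheses), Lemma~\ref{lem:replacing-two-cycles} together with the maximality of $W(G)$ yields the device I will use: the cycle adjacent to the strictly larger outer part must equal $C_2$, and if $|V(A)|=|V(B)|$ then at least one of $H_j,H_{j+1}$ equals $C_2$. In particular, $H_j\ne C_2$ implies $|V(A)|\le|V(B)|$, and $H_{j+1}\ne C_2$ implies $|V(A)|\ge|V(B)|$.

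To prove $|V(H_1)|=\cdots=|V(H_{\ell-2})|=2$, I fix a middle index $1\le j\le\ell-2$ and assume $H_j\ne C_2$ for contradiction. Applying the device to the pair $(H_{j-1},H_j)$, in which $H_j$ is the right block, gives $|V(H_0\circ\cdots\circ H_{j-2})|\ge|V(H_{j+1}\circ\cdots\circ H_\ell)|$; applying it to $(H_j,H_{j+1})$, in which $H_j$ is the left block, gives $|V(H_0\circ\cdots\circ H_{j-1})|\le|V(H_{j+2}\circ\cdots\circ H_\ell)|$. Since adjoining one more block increases the vertex count by at least one (every block has at least two vertices), $|V(H_0\circ\cdots\circ H_{j-1})|>|V(H_0\circ\cdots\circ H_{j-2})|$ and $|V(H_{j+1}\circ\cdots\circ H_\ell)|>|V(H_{j+2}\circ\cdots\circ H_\ell)|$. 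Chaining the four relations makes $|V(H_0\circ\cdots\circ H_{j-2})|$ strictly larger than itself, a contradiction; hence $H_j=C_2$. (When $j=1$ the left outer part is a single vertex and the first inequality already reads $1\ge|V(H_2\circ\cdots\circ H_\ell)|\ge2$.)

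For the final sentence I would treat the two hypotheses separately, in each case choosing the consecutive pair to which the device applies. If $H_\ell$ is a terminal block, then it is a cycle by Lemma~\ref{lem:terminal-blocks}, so the device applies to $(H_{\ell-1},H_\ell)$: the outer part beyond $H_\ell$ is a single vertex whereas the outer part $H_0\circ\cdots\circ H_{\ell-2}$ beyond $H_{\ell-1}$ has at least two vertices, so $H_{\ell-1}$, being adjacent to the larger side, equals $C_2$. If instead $|V(H_0\circ\cdots\circ H_{\ell-2})|\le|V(H_\ell)|$, then $H_\ell$ need not be a single cycle, so I apply the device to $(H_{\ell-2},H_{\ell-1})$ instead: the outer part beyond $H_{\ell-1}$ is $H_\ell$, while the outer part beyond $H_{\ell-2}$ is $H_0\circ\cdots\circ H_{\ell-3}$, whose vertex count is strictly smaller than $|V(H_0\circ\cdots\circ H_{\ell-2})|\le|V(H_\ell)|$; thus the larger side lies beyond $H_{\ell-1}$ and again $H_{\ell-1}=C_2$.

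The step I expect to be delicate is the justification of the device rather than its subsequent use. One must verify that Lemma~\ref{lem:replacing-two-cycles} may be invoked in either orientation of the chain — legitimate because reversing $G_0\circ G_1\circ G_2\circ G_3$ leaves the antipodality requirements symmetric — and that the degenerate situation of a single-vertex outer part is genuinely admissible, which it is since that lemma explicitly allows $|V(G_0)|=1$; one also checks the standing hypothesis $n_3\ge2$, which holds for the larger outer part in every application above. Once the two implications of the device are in hand, each of the three cases reduces to the elementary size comparisons carried out above.
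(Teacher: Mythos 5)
Your proof is correct and follows essentially the same route as the paper: both arguments reduce everything to Lemma~\ref{lem:replacing-two-cycles} (after using Lemmas~\ref{lem:terminal-blocks} and~\ref{lem:traversal-blocks} to make each $H_i$ a cycle or $C_2$ with antipodal attachment vertices), applied to consecutive pairs along the chain with the orientation chosen so that the smaller outer part plays $G_0$, allowing a trivial one-vertex $G_0$. The only differences are organizational: the paper first deduces that the neighbours of a hypothetical big middle cycle are $K_2$ and then compares $t_1=|V(H_0\circ\dots\circ H_{i-1})|$ with $t_2=|V(H_{i+1}\circ\dots\circ H_\ell)|$, and it handles the terminal-$H_\ell$ case by reversing the chain (with a separate $\ell=2$ subcase), whereas your chained size inequalities and your direct application of the device to the pair $(H_{\ell-1},H_\ell)$ reach the same conclusions with slightly less case analysis.
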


\begin{proof}
Since $H_0$ is a terminal block and $H_1,\dots,H_{\ell-1}$ are traversal,
each of these blocks is either a cycle or $K_2$, by
Lemmas~{\ref{lem:terminal-blocks}} and~{\ref{lem:traversal-blocks}}.
Moreover, by Lemma~{\ref{lem:traversal-blocks}} we know that the attachment
vertices $v_i$ and $v_{i+1}$ are opposite on $H_i$, $1\le i\le\ell-1$.

Suppose that among $H_1,\dots,H_{\ell-2}$ there is a cycle on at least $3$
vertices, say $H_i$.
By Lemma~{\ref{lem:replacing-two-cycles}} both $H_{i-1}$ and $H_{i+1}$ must
be isomorphic to $K_2$.
Denote $t_1=|V(H_0\circ\dots\circ H_{i-1})|$ and
$t_2=|V(H_{i+1}\circ\dots\circ H_{\ell})|$.
We distinguish two cases.

{\bf Case~1:}
$t_1\le t_2$.
Denote $G_0=H_0\circ\dots\circ H_{i-2}$, $G_1=H_{i-1}$, $G_2=H_i$ and
$G_3=H_{i+1}\circ\dots\circ H_{\ell}$.
(Observe that if $i=1$ then $G_0$ is trivial consisting of a single vertex.)
Then $n_0=t_1-1<t_2=n_3$.
Hence, by Lemma~{\ref{lem:replacing-two-cycles}} we have $n_2=2$, a
contradiction.

{\bf Case~2:}
$t_1>t_2$.
Denote $G_0=H_{\ell}\circ\dots\circ H_{i+2}$, $G_1=H_{i+1}$, $G_2=H_i$ and
$G_3=H_{i-1}\circ\dots\circ H_0$.
Then $n_0=t_2-1<t_1=n_3$.
Hence, by Lemma~{\ref{lem:replacing-two-cycles}} we have $n_2=2$, a
contradiction.

Now we consider $H_{\ell-1}$.
If $|V(H_0\circ\dots\circ H_{\ell-2})|\le|V(H_{\ell})|$, then denote 
$G_0=H_0\circ\dots\circ H_{\ell-3}$, $G_1=H_{\ell-2}$, $G_2=H_{\ell-1}$ and
$G_3=H_{\ell}$.
(Observe that if $\ell=2$ then $G_0$ is trivial.)
Since $n_0<n_3$, by Lemma~{\ref{lem:replacing-two-cycles}} we have $n_2=2$.

If $H_{\ell}$ is a terminal block and $\ell\ge 3$, then relabelling the
blocks (reversing their order) we can prove that $|V(H_{\ell-1})|=2$.

Finally, if $H_{\ell}$ is a terminal block, $\ell=2$ and
$|V(H_0)|>|V(H_2)|$, then let $G_0$ be trivial, $G_1=H_2$, $G_2=H_1$ and
$G_3=H_0$.
Then $n_0<n_3$, and so $n_2=2$ by Lemma~{\ref{lem:replacing-two-cycles}}.
\end{proof}

By $\Theta_{a,b,c}$ we denote a graph consisting of two vertices,
which are connected by three internally vertex-disjoint paths of lengths $a$,
$b$ and $c$.
Observe that $\Theta_{a,b,c}$ has $a+b+c-1$ vertices.
In \cite[Lemma~5]{KS} we have the following statement.

\begin{theorem}
\label{thm:n+1}
Let $G$ be a $2$-connected graph on $n$ vertices,
having three vertices $v_1$, $v_2$ and $v_3$ such that
$$
D=\sum_{1\le i<j\le 3}d_G(v_i,v_j)
$$
is maximum  possible.
Then $D\le n+1$ and the equality is attained only if $G$ is $\Theta_{a,b,c}$,
where all $a$, $b$ and $c$ are even.
\end{theorem}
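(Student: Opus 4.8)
The plan is to reduce the statement to the case where the host graph is (a subdivision of) a theta graph, and then to settle the theta case by an elementary ``the minimum is at most the average'' estimate. Since $G$ is $2$-connected, Menger's theorem gives two internally disjoint $v_1$--$v_2$ paths whose union is a cycle $C$ through $v_1$ and $v_2$. If $v_3\in V(C)$ we set $\Theta=C$; otherwise the fan lemma yields two paths from $v_3$ to two distinct vertices $s,t$ of $C$, internally disjoint from $C$ and from each other, and we let $\Theta$ be their union with $C$. In the latter case $\Theta$ is a subdivision of a theta graph, i.e.\ $\Theta=\Theta_{a,b,c}$ with branch vertices $s,t$, the vertices $v_1,v_2$ lying on the two arcs of $C$ and $v_3$ on the third path. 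In either case $v_1,v_2,v_3\in V(\Theta)$ and $n_\Theta:=|V(\Theta)|\le n$. As $\Theta$ is a subgraph of $G$, every $\Theta$-path is a $G$-path, so $d_G(v_i,v_j)\le d_\Theta(v_i,v_j)$ and hence
$$
D\ \le\ D_\Theta:=\sum_{1\le i<j\le 3} d_\Theta(v_i,v_j).
$$
It then suffices to prove $D_\Theta\le n_\Theta+1$ for three vertices of a cycle or of a theta graph.

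For a cycle $C_m$ the three vertices split it into arcs of lengths $x,y,z$ with $x+y+z=m$, and the pairwise distances are $\min(x,m-x)$, $\min(y,m-y)$, $\min(z,m-z)$, whence $D_{C_m}\le x+y+z=m=n_\Theta$. For $\Theta_{a,b,c}$ I record a vertex on the path of length $\ell$ by its distance $p$ from $s$ and $q=\ell-p$ from $t$. The main case is when the three vertices lie on three \emph{different} paths: any route between two of them may pass through $s$ or through $t$, so $d_\Theta(v_i,v_j)\le\min(p_i+p_j,\,q_i+q_j)\le\tfrac12\big((p_i+q_i)+(p_j+q_j)\big)=\tfrac12(\ell_i+\ell_j)$, and summing over the three pairs gives
$$
D_\Theta\ \le\ \ell_1+\ell_2+\ell_3\ =\ a+b+c\ =\ n_\Theta+1 .
$$
The remaining placements (two or three of the vertices on one path, or a vertex equal to $s$ or $t$) are reduced to the same averaging bound after using the third path as a shortcut to control the within-path distance; in each of these situations one gets a value strictly below $n_\Theta+1$.

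For equality, chaining $D\le D_\Theta\le n_\Theta+1\le n+1$ shows that $D=n+1$ forces all inequalities to be tight: $n_\Theta=n$ (so $\Theta$ spans $G$), $D_\Theta=n_\Theta+1$, and $d_G(v_i,v_j)=d_\Theta(v_i,v_j)$ for every pair. Since $D_\Theta=n_\Theta+1$ occurs only in the three-different-paths case, tightness of each averaging step means $p_i=q_i$ for all $i$, i.e.\ each $v_i$ is the midpoint of its path and each of $a,b,c$ is even. Finally one checks that $G$ can have no edge off $\Theta$: such a chord would shorten one of the three midpoint geodesics, contradicting $d_G=d_\Theta$. This yields $G=\Theta_{a,b,c}$ with $a,b,c$ even. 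The conceptual heart (the fan-lemma reduction and the averaging bound in the three-paths case) is short; the genuine work, and the main obstacle, is the bookkeeping for the degenerate placements in the theta case, where the averaging estimate must be combined with the cycle shortcut to stay below $n_\Theta+1$, together with the last verification that the extremal graph carries no chord outside the theta structure.
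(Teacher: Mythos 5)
First, a remark on the comparison itself: the paper contains no proof of this statement --- Theorem~\ref{thm:n+1} is imported from the unpublished manuscript \cite{KS} --- so your proposal can only be judged against the statement. Your proof of the inequality $D\le n+1$ is essentially sound: the Menger/fan-lemma reduction to a cycle or theta subdivision $\Theta\subseteq G$ through $v_1,v_2,v_3$, the monotonicity $d_G\le d_\Theta$, and the averaging bound $d_\Theta(v_i,v_j)\le\min(p_i+p_j,q_i+q_j)\le\frac12(\ell_i+\ell_j)$ in the three-paths case give $D\le n_\Theta+1\le n+1$. The degenerate placements you only sketch do check out (two vertices on one path give $D_\Theta\le \ell_i+\ell_j\le n_\Theta$; a vertex at a branch vertex, or all three on one path, similarly give at most $n_\Theta$), so that bookkeeping, while hand-waved, is fillable.

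The genuine gap is the final step of your equality analysis, and it cannot be repaired as stated: the claim that any edge of $G$ off $\Theta$ ``would shorten one of the three midpoint geodesics'' is false. The chord $st$ joining the two branch vertices of $\Theta_{a,b,c}$ shortens none of the distances $d(v_i,v_j)$: each midpoint geodesic uses exactly one branch vertex and keeps length $\frac12(\ell_i+\ell_j)$, while the new edge only offers detours through both branch vertices, which are longer. Concretely, take $G=\Theta_{2,2,2}+st$, i.e.\ $K_{2,3}$ plus the edge joining the two vertices of its small side: $G$ is $2$-connected on $n=5$ vertices, the maximum of $D$ over all triples is $2+2+2=6=n+1$, attained by the three degree-$2$ vertices, yet $G$ is not any $\Theta_{a,b,c}$ (it has two vertices of degree $4$). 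The same phenomenon occurs for every $\Theta_{a,b,c}+st$ with $a,b,c$ even, and even for $\Theta_{4,2,2}$ plus the edge joining the two path-neighbours of the midpoint of the length-$4$ path. So what your (otherwise correct) tightness chain actually proves is only that $G$ \emph{contains} a spanning $\Theta_{a,b,c}$ with $a,b,c$ even whose midpoints are $v_1,v_2,v_3$ and whose pairwise $\Theta$-distances are realized in $G$; additional edges cannot be excluded, and indeed the equality clause of the theorem as quoted admits the counterexamples above. (This flaw is inherited from the quoted statement, not introduced by you; and it is harmless for the present paper, which only uses $D\le n_0+1$ and the consequence that $D\le n$ when $n$ is even --- all the extremal graphs above have odd order, so both uses survive.)
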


Observe that if $n$ is even then $D<n+1$ by Theorem~{\ref{thm:n+1}}.
Using this statement we prove the following lemma.

\begin{lemma}
\label{lema:2-terminal}
Let $n>p$.
Let $G$ be a graph on $n$ vertices with $p$ blocks which has the maximum
Wiener index.
Then $G$ has exactly two terminal blocks.
\end{lemma}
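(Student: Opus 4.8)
The plan is to reduce the statement to excluding exactly three terminal blocks. Since $p\ge 2$, the blocks-tree of $G$ is a tree with at least one edge, hence has at least two leaves; as the leaves of the blocks-tree are precisely the terminal blocks, $G$ has at least two of them, and by Lemma~\ref{lema:<4} it has at most three. So it suffices to derive a contradiction from the assumption that $G$ has exactly three terminal blocks $B_1,B_2,B_3$. By Lemma~\ref{lem:terminal-blocks} each $B_i$ is a cycle or $K_2$; let $v_i$ be its attachment vertex and $u_i$ a vertex opposite to $v_i$ in $B_i$. Because $G$ is extremal, relocating any one of these terminal blocks to the antipode of another cannot increase the Wiener index, so Lemma~\ref{lem:distance-2-vertices} applies to every pair and yields $d_G(u_i,u_j)\ge\frac{n-1}2$ for all $1\le i<j\le 3$.

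With exactly three terminal blocks the blocks-tree has a unique branch vertex, and I would split the argument according to whether this branch is a cut-vertex of $G$ or a $2$-connected block. The common engine is Proposition~\ref{prop:W1}: if $A$ is one of the three arms hanging at the branch, attached to the remainder $R$ at a vertex $x$, then relocating $A$ so that it is attached at another vertex $y$ of $R$ changes the Wiener index by exactly $\big(w_R(y)-w_R(x)\big)\cdot(n_A-1)$. Thus it is enough to find one arm and a target vertex of strictly larger transmission in the remainder. Relocating an arm to a tip $u_j$ of another arm turns that arm's terminal block into a traversal block, so the resulting graph has the same $n$ and $p$ but only two terminal blocks, contradicting maximality.

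When the branch is a cut-vertex $c$, deleting the smallest arm leaves a remainder $R$ consisting of two arms meeting at $c$, so $c$ is an interior cut-vertex of the path-like graph $R$ while each tip $u_j$ is terminal in it; since the extreme vertices of such a structure maximize transmission, $w_R(u_j)>w_R(c)$, and the relocation strictly increases $W$. Here the distance conditions are what rule out degenerate small cases: from $d_G(u_i,u_j)=d_G(c,u_i)+d_G(c,u_j)$ and the three bounds $\ge\frac{n-1}2$ one gets $\sum_i d_G(c,u_i)\ge\frac{3(n-1)}4$, so the arms are genuinely long and the inequality is strict. When the branch is a $2$-connected block $B_0$ on $n_0$ vertices with the arms attached at $v_1,v_2,v_3\in V(B_0)$, Theorem~\ref{thm:n+1} bounds the spread of the attachment vertices by $\sum_{i<j}d_{B_0}(v_i,v_j)\le n_0+1$. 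Writing $r_i=d_G(v_i,u_i)$ and combining $d_G(u_i,u_j)=r_i+d_{B_0}(v_i,v_j)+r_j\ge\frac{n-1}2$ with this bound forces the arms to carry a positive fraction of the vertices (one gets roughly $n_0\le\frac{n+5}2$), which is exactly what makes a tip $u_j$ strictly more peripheral than an attachment $v_i$ in the corresponding remainder, so the relocation again strictly increases $W$.

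I expect the block-hub case to be the main obstacle. Unlike the cut-vertex case, the three attachment points now lie inside a large $2$-connected block whose geometry is a priori unconstrained: without an upper bound on how far apart $v_1,v_2,v_3$ can sit, they could in principle be arranged so that every $v_i$ is locally transmission-maximal in its remainder, and then no single-arm relocation to a tip would help. Theorem~\ref{thm:n+1} is precisely the tool that forbids this, and the delicate point will be to turn its global bound $\sum_{i<j}d_{B_0}(v_i,v_j)\le n_0+1$, together with the forced length of the arms, into the strict transmission inequality $w_{R_i}(u_j)>w_{R_i}(v_i)$ in every residual small configuration (including the extremal even $\Theta_{a,b,c}$). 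Carrying out this transmission bookkeeping carefully, rather than the case split itself, is where the real work lies.
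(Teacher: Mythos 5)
Your opening reductions are sound (at least two terminal blocks since the blocks-tree has at least two leaves, at most three by Lemma~\ref{lema:<4}, the relocation formula $\big(w_R(y)-w_R(x)\big)(n_A-1)$ from Proposition~\ref{prop:W1}, and the pairwise bounds $d_G(u_i,u_j)\ge\frac{n-1}2$ from Lemma~\ref{lem:distance-2-vertices}), but the core of the proof has two genuine gaps. First, in the cut-vertex case your key inequality rests on the principle that ``the extreme vertices of such a structure maximize transmission'', and that principle is false: a tip buried in a terminal cycle can have \emph{smaller} transmission than the hub. Concretely, let $R$ consist of the path $c\,x_1\,x_2\,v_1$ with a $5$-cycle attached at $v_1$, together with a pendant edge $cz$; if $u_1$ is a vertex of the $5$-cycle opposite $v_1$, then $w_R(u_1)=24<25=w_R(c)$. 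So $w_R(u_j)>w_R(c)$ cannot be asserted for \emph{each} tip. What is true is that at least one of the two tips must beat $c$: writing $R=A_1\circ_c A_2$, one has $w_{A_i}(c)-w_{A_i}(u_i)\le (n_i-2)\,d(c,u_i)$ by the triangle inequality, so failure at tip $i$ forces $n_i>n_{3-i}$, and both tips cannot fail simultaneously. But this argument is not in your text, and the vague appeal to the distance conditions (``the arms are genuinely long'') does not substitute for it.

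Second, and more seriously, the block-hub case --- which you yourself identify as ``where the real work lies'' --- is not carried out at all: you only express the expectation that Theorem~\ref{thm:n+1} plus the distance bounds will yield $w_{R_i}(u_j)>w_{R_i}(v_i)$ in every configuration. That is precisely the hard content of the lemma, and it is not evident that single-arm relocation even suffices there, since the same compression phenomenon exhibited above can defeat each individual relocation, and you give no argument excluding this. The paper is structurally different at exactly this point: it first invokes Lemma~\ref{lema:chains} to pin down the two smaller arms as $C_{k_i}\circ_{u_i}P_{t_i}$, and then makes one \emph{global} move --- merging the two cycles into $C_{k_2+k_3-2}$, the two paths into $P_{t_2+t_3}$, and replacing the hub block $G_0$ by a cycle in which the two attachment vertices are opposite --- proving $W(G')>W(G)$ by grouping all terms of Proposition~\ref{prop:W2} and using Theorem~\ref{thm:n+1} only to bound the pairwise distances of $v_1,v_2,v_3$ inside $G_0$ (the paper also treats a cut-vertex hub as a trivial block, so it needs no case split). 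Without either completing your transmission estimates or switching to such a global reconstruction, your proposal remains a plan rather than a proof.
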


\begin{proof}
By Lemma~{\ref{lema:<4}}, $G$ has at most three terminal blocks.
By way of contradiction, suppose that $G$ has exactly three terminal blocks.
Then its blocks-tree has one vertex of degree $3$, three vertices of degree $1$
corresponding to terminal blocks, and all the remaining vertices have degree $2$.
The vertex of degree $3$ corresponds either to a~block or to a~cut-vertex.
To simplify the reasoning, in the latter case we consider the cut-vertex as
a~trivial block.

Hence, $G$ consists of a block $G_0$ with three vertices $v_1$, $v_2$
and $v_3$ in which there are attached connected unions of blocks
$G_1$, $G_2$ and $G_3$, respectively (obviously, the vertices $v_1$,
$v_2$ and $v_3$ are not necessarily disjoint).  We assume that $n_1\ge
n_2\ge n_3$.  By Lemma~{\ref{lema:chains}}, since $n_1\ge n_i$ for
$i\in\{2, 3\}$, we have $G_i=C_{k_i}\circ_{u_i} P_{t_i}$, where $u_i$
is one endvertex of the path $P_{t_i}$ and $v_i$ is another one, and
$k_i+t_i-1=n_i$.  Observe that $G_1$ may consist of two cycles
connected by a path, but we do not need to consider the structure of
$G_1$.  The structure of $G$ is visualized on Figure~{\ref{fig:1}}.

%
%

\begin{figure}[h]
\begin{center}
\includegraphics[scale=1.1]{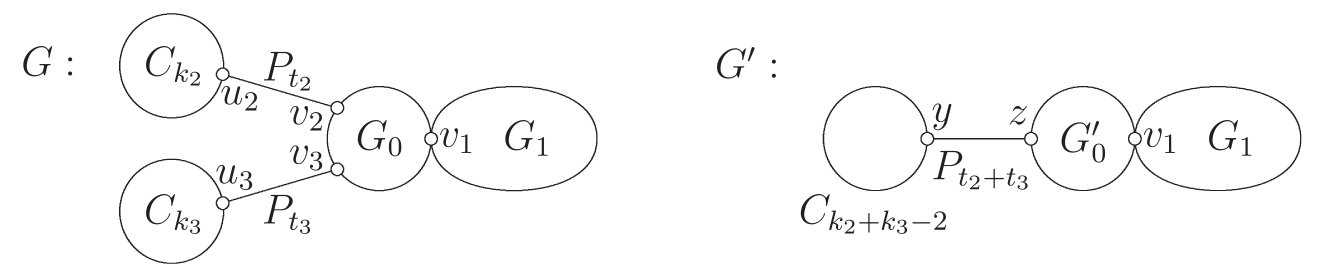}
\caption{Graphs $G$ and $G'$ in the proof of Lemma~{\ref{lema:2-terminal}}.}
\label{fig:1}
\end{center}
\end{figure}


Now we construct $G'$ on $n$ vertices with $p$ blocks,
so that $G'$ will have just two terminal blocks and $W(G')>W(G)$.
First, if $n_0\ge 3$, then let $G'_0$ be a cycle on $n_0$ vertices
in which $v_1$ is opposite to $z$.
If $n_0\le 2$, then $G_0$ is a cut vertex since the case $G_0=K_2$ is
impossible.
So set $G_0'=G_0$ and $z=v_1$ if $n_0=1$.
Let $z$ and $y$ be the two endvertices of $P_{t_2+t_3}$.
Then $G'=C_{k_2+k_3-2}\circ_y P_{t_2+t_3}\circ_z G'_0\circ_{v_1}G_1$, see
Figure~{\ref{fig:1}}.
Observe that the graphs $G$ and $G'$ have the same number of blocks and they
have also the same number of vertices.

Since $G'$ is simpler than $G$, we calculate $W(G')$ exactly.
However, for $W(G)$ we use just an upper bound $W$.
Below we show that $W(G')-W>0$.
Since $W(G)\le W$, this implies that also $W(G')-W(G)>0$.

By Proposition~{\ref{prop:P_n}}, $w_{P_a}(x)=\frac{a^2-a}2$ if $x$ is
an endvertex of a path of length $a$.
But if $x$ is a vertex of $C_a$ then $w_{C_a}(x)=\frac{a^2}4$ if $a$ is even
and $W_{C_a}(x)=\frac{a^2-1}4$ if $a$ is odd,
see Proposition~{\ref{prop:C_n}}.
Therefore, we distinguish two cases according to the parity 
of $k_2+k_3-2$.
If $k_2+k_3-2$ is odd, then exactly one of $k_2$ and $k_3$ is odd as well.
Since we do not use the inequality $n_2\ge n_3$ in the proof, 
without loss of generality we may assume that $k_2$ is even and $k_3$ is odd
in this case.
If $k_2+k_3-2$ is even, then either both $k_2$ and $k_3$ are even or both
are odd.
However, since it suffices to find an upper bound $W$ on $W(G)$
such that $W(G')-W>0$, we use the upper bounds $\frac{k_2^2}4$ and
$\frac{k_3^2}4$ for $w_{C_{k_1}}(u_1)$ and $w_{C_{k_2}}(u_2)$, respectively,
in this case.

Now we bound $W(G')-W(G)$ using Proposition~{\ref{prop:W2}}.
The graph $G$ is composed of six parts $C_{k_2}$, $P_{t_2}$, $C_{k_3}$,
$P_{t_3}$, $G_0$ and $G_1$, see Figure~{\ref{fig:1}}.
Therefore we have 6 terms in the first sum of ({\ref{sum:W2}}),
$2\binom 62$ terms due to the first two products in the second sum of ({\ref{sum:W2}})
and $\binom 62-5$ terms due to the third product in the second sum of ({\ref{sum:W2}}).
This yields 46 terms due to $G$.
The graph $G'$ is composed of four parts $C_{k_2+k_3-2}$, $P_{t_2+t_3}$,
$G_0'$ and $G_1$, see Figure~{\ref{fig:1}}.
Therefore we have 19 terms due to $G'$ in ({\ref{sum:W2}}).
Since there are too many terms, we divide them into several groups
and we show that the sum of terms in each group is nonnegative.

{\bf 1}.
First consider {\it the terms containing $w_{G_1}(v_1)$}.
These terms occur in the first two products of the second sum of
({\ref{sum:W2}}).
In $W(G)$ these terms are $(k_2-1)w_{G_1}(v_1)$, $(t_2-1)w_{G_1}(v_1)$,
$(k_3-1)w_{G_1}(v_1)$, $(t_3-1)w_{G_1}(v_1)$ and $(n_0-1)w_{G_1}(v_1)$.
Observe that they sum to $(n-n_1)w_{G_1}(v_1)$.
Since in $W(G')$ the three terms $(k_2+k_3-3)w_{G_1}(v_1)$,
$(t_2+t_3-1)w_{G_1}(v_1)$ and $(n_0-1)w_{G_1}(v_1)$ containing
$w_{G_1}(v_1)$ sum again to $(n-n_1)w_{G_1}(v_1)$,
these terms contribute $0$ to $W(G')-W(G)$.

{\bf 2}.
Now consider {\it the terms containing $w_{G_0}(v_1)$,
$w_{G_0}(v_2)$, $w_{G_0}(v_3)$, $w_{G'_0}(v_1)$, and
$w_{G'_0}(z)$}.
Since $w_{G'_0}(v_1)=w_{G'_0}(z)$, in $W(G')$ these terms sum to
$(n-n_0)w_{G'_0}(v_1)$.
By Proposition~{\ref{prop:C_n}}, we have
$w_{G_0}(v_i)\le w_{G'_0}(v_1)$, $1\le i\le 3$.
Hence, the upper bound for the contribution of considered terms
to $W(G)$ is also $(n-n_0)w_{G'_0}(v_1)$.
Consequently, these terms contribute at least $0$ to $W(G')-W(G)$.

{\bf 3.}
Now consider {\it the terms containing $(n_0-1)$ which were not considered
in the groups 1. and 2. above, together with the terms containing distances
in $G_0$ and $G'_0$}.
We start with the case when $k_2+k_3-2$ is even.

First consider the terms containing $(n_0-1)$.
Their contribution to $W(G')-W(G)$ is at least (the fractions correspond
to $w_H(x)$'s, while the non-fractions correspond to the last term
in ({\ref{sum:W2}})).
\begin{eqnarray}
&&(n_0-1)\Big[\tfrac{(k_2+k_3-2)^2}4+\tfrac{(t_2+t_3)^2-(t_2+t_3)}2
+(k_2+k_3-3)(t_2+t_3-1)\nonumber\\
&&\quad -\tfrac{k_2^2}4-\tfrac{k_3^2}4-\tfrac{t_2^2-t_2}2-\tfrac{t_3^2-t_3}2
-(k_2-1)(t_2-1)-(k_3-1)(t_3-1)\Big]\nonumber\\
&&=(n_0-1)\big[\tfrac 12(k_2-2)(k_3-2)+(k_2-2)t_3+(k_3-2)t_2+t_2t_3\big].
\label{eq:31}
\end{eqnarray}
Since $n_0\ge 1$, $k_2\ge 2$, $k_3\ge 2$, $t_2\ge 1$ and $t_3\ge 1$,
the expression (\ref{eq:31}) is nonnegative.

Now consider the terms containing distances in $G_0$ and $G_0'$.
In $W(G)$ these terms sum to $(n_1-1)(n_2-1)d_{G_0}(v_1,v_2)$,
$(n_1-1)(n_3-1)d_{G_0}(v_1,v_3)$ and $(n_2-1)(n_3-1)d_{G_0}(v_2,v_3)$.
By Theorem~{\ref{thm:n+1}} we have
$d_{G_0}(v_1,v_2)+d_{G_0}(v_1,v_3)+d_{G_0}(v_2,v_3)\le n_0+1$.
Since $n_1\ge n_2$ and $n_1\ge n_3$, we obtain the biggest contribution if
$d_{G_0}(v_1,v_2)$ and $d_{G_0}(v_1,v_3)$ are maximum possible,
namely $\lfloor\frac{n_0}2\rfloor$.
Then $d_{G_0}(v_2,v_3)\le2$.
Hence, the contribution of these terms to $W(G)$ is at most
$$
\big\lfloor\tfrac{n_0}2\big\rfloor\big[(n_1-1)(n_2-1)+(n_1-1)(n_3-1)\big]
+2(n_2-1)(n_3-1),
$$
while the contribution of the terms containing $d_{G'_0}(v_1,z)$
to $W(G')$ is
$$
d_{G'_0}(v_1,z)(n_1-1)(n_2+n_3-2)
=\big\lfloor\tfrac{n_0}2\big\rfloor(n_1-1)(n_2+n_3-2).
$$
Consequently, the contribution of these terms to $W(G')-W(G)$ is at least
\begin{equation}
\label{eq:32}
-2(n_2-1)(n_3-1)=
-2\big[(k_2-2)(k_3-2)+(k_2-2)t_3+(k_3-2)t_2+t_2t_3\big].
\end{equation}
Our aim is to show that the sum of the right-hand sides of (\ref{eq:31}) and
(\ref{eq:32}) is nonnegative.
We consider five cases.

{\bf Case~1:}
$n_0\ge 5$.
Since the expression in brackets containing $k_2$, $k_3$, $t_2$ and $t_3$
in (\ref{eq:31}) is nonnegative, it suffices to show nonnegativity of the sum
of (\ref{eq:31}) and (\ref{eq:32}) for $n_0=5$.
Since this sum is
$$
2(k_2-2)t_3+2(k_3-2)t_2+2t_2t_3>0,
$$
the contribution of selected terms is nonnegative in this case.

{\bf Case~2:}
$n_0=1$.
In this case the considered distances in $G_0$ and $G'_0$ are $0$ as well as
$(n_0-1)$.
Hence, the contribution of selected terms is $0$ in this case.

{\bf Case~3:}
$n_0=2$.
This case is impossible, since if $G_0=K_2$ then the vertex of degree $3$
in the blocks-tree is a cut-vertex.

{\bf Case~4:}
$n_0=3$.
In this case we have $d_{G_0}(v_i,v_j)=1$, $1\le i<j\le 3$, and also
$d_{G'_0}(v_1,z)=1$.
Hence, the contribution of the terms based on distances is
$$
-1\big[(k_2-2)(k_3-2)+(k_2-2)t_3+(k_3-2)t_2+t_2t_3\big]
$$
and the total contribution of considered terms is
$$
(k_2-2)t_3+(k_3-2)t_2+t_2t_3>0.
$$

{\bf Case~5:}
$n_0=4$.
By Theorem~{\ref{thm:n+1}} we have
$d_{G_0}(v_1,v_2)+d_{G_0}(v_1,v_3)+d_{G_0}(v_2,v_3)\le n_0 = 4$. In this
case, the sum of the terms containing distances in $G_0$ and $G'_0$ is
non-negative.
So the considered terms contribute to $W(G')-W(G)$ by at least
$$
(n_0-1)\big[\tfrac 12(k_2-2)(k_3-2)+(k_2-2)t_3+(k_3-2)t_2+t_2t_3\big]>0.
$$

Summing up, the contribution of considered terms to $W(G')-W(G)$ is
at least $0$ if $k_2+k_3-2$ is even.
If $k_2+k_3-2$ is odd, the only changes consist in replacing
$\frac{(k_2+k_3-2)^2}4$ and $\frac{k_2^2}4$ by
$\frac{(k_2+k_3-2)^2-1}4$ and $\frac{k_2^2-1}4$, respectively.
Hence, we obtain exactly the same expressions as in the even case.

{\bf 4}.
Now we consider {\it the terms containing $(n_1-1)$, which were not
considered before}.
Again, we start with the case when $k_2+k_3-2$ is even.
The contribution of the terms containing $(n_1-1)$ is at least
(compare with (\ref{eq:31}))
\begin{eqnarray}
&&(n_1-1)\Big[\tfrac{(k_2+k_3-2)^2}4+\tfrac{(t_2+t_3)^2-(t_2+t_3)}2
+(k_2+k_3-3)(t_2+t_3-1)\nonumber\\
&&\quad -\tfrac{k_2^2}4-\tfrac{k_3^2}4-\tfrac{t_2^2-t_2}2-\tfrac{t_3^2-t_3}2
-(k_2-1)(t_2-1)-(k_3-1)(t_3-1)\big]\nonumber\\
&&=(n_1-1)\big[\tfrac 12(k_2-2)(k_3-2)+(k_2-2)t_3+(k_3-2)t_2+t_2t_3\big].\nonumber
\end{eqnarray}
Since the expression in brackets containing $k_2$, $k_3$, $t_2$ and $t_3$
is nonnegative, we can replace $(n_1-1)$ by a value which is not larger
than $(n_1-1)$ and we will  not increase the contribution of considered terms.
Since $(n_1-1)\ge (n_i-1)=(k_i+t_i-2)$, $2\le i\le 3$, we get
$(n_1-1)\ge\frac 12(k_2+k_3+t_2+t_3-4)$.
Hence, the contribution of considered terms is at least
\begin{equation}
\label{eq:4}
(k_2+k_3+t_2+t_3-4)\big[\tfrac 12(k_2-2)(k_3-2)+(k_2-2)t_3+(k_3-2)t_2+t_2t_3\big]
\end{equation}
if $k_2+k_3-2$ is even.
If $k_2+k_3-2$ is odd, we obtain the very same expression.

{\bf 5}.
Finally, we consider {\it the remaining terms, i.e., the terms which were not
considered in the groups 1.-4. above}.
Then we include the terms from (\ref{eq:4}) and we show that their sum is positive.
Again, we start with the case when $k_2 + k_3 - 2$ is even.

Since $W(G'_0)-W(G_0)\ge 0$, the terms from the first sum
of Proposition~{\ref{prop:W2}} contribute to $W(G')-W(G)$ by at least
\begin{equation}
\label{eq:51}
\tfrac{(k_2+k_3-2)^3}8+\tfrac{(t_2+t_3)^3-(t_2+t_3)}6
-\tfrac{k_2^3}8-\tfrac{k_3^3}8-\tfrac{t_2^3-t_2}6-\tfrac{t_3^3-t_3}6.
\end{equation}

The terms from the second sum of (\ref{sum:W2})
contribute to $W(G')-W(G)$ by at least
\begin{eqnarray}
&&\quad\tfrac{(k_2+k_3-2)^2}4(t_2+t_3-1)+\tfrac{(t_2+t_3)^2-(t_2+t_3)}2(k_2+k_3-3)\nonumber\\
&&-\,\tfrac{k_2^2}4(k_3+t_2+t_3-3)-\tfrac{k_3^2}4(k_2+t_2+t_3-3)
-\tfrac{t_2^2-t_2}2(k_2+k_3+t_3-3)\nonumber\\
&&-\,\tfrac{t_3^2-t_3}2(k_2+k_3+t_2-3)-(k_2-1)(k_3-1)(t_2+t_3-2)\nonumber\\
&&-\,(k_2-1)(t_3-1)(t_2-1)-(k_3-1)(t_2-1)(t_3-1).
\label{eq:52}
\end{eqnarray}

And summing (\ref{eq:4}), (\ref{eq:51}) and (\ref{eq:52}) we get
\begin{eqnarray}
&&\quad\Big(\tfrac{k_3-2}4+\tfrac{t_3-1}2+\tfrac 12\Big)(k_2-2)^2
+\Big(\tfrac{k_2-2}4+\tfrac{t_2-1}2+\tfrac 12\Big)(k_3-2)^2\nonumber\\
&&+\Big(\tfrac{k_3-2}4+\tfrac{t_3-1}2\Big)(t_2-1)^2
+\Big(\tfrac{k_2-2}4+\tfrac{t_2-1}2\Big)(t_3-1)^2\nonumber\\
&&+\Big(\tfrac{k_3^2-4}8+\tfrac{k_3t_3}4+t_2t_3\Big)(k_2-2)
+\Big(\tfrac{k_2^2-4}8+\tfrac{k_2t_2}4+t_2t_3\Big)(k_3-2)\nonumber\\
&&+\big(t_2^2-1\big)\tfrac{k_3}4+\big(t_3^2-1\big)\tfrac{k_2}4
+2(t_2-1)(t_3-1)+\tfrac{t_2}2+\tfrac{t_3}2
\label{eq:5f}
\end{eqnarray}
which is positive since all the terms are nonnegative while the last two are
at least $\frac 12$ each.

Now consider the case when $k_2+k_3-2$ is odd.
Then (\ref{eq:4}) is without a change, (\ref{eq:51}) is increased by
$-\frac{k_2+k_3-2}8+\frac{k_3}8=\frac{2-k_2}8$ and (\ref{eq:52}) is increased
by $-\frac 14(t_2+t_3-1)+\frac 14(k_2+t_2+t_3-3)=\frac{k_2-2}4$.
So the sum of considered terms is exactly as in (\ref{eq:5f}) plus
a nonnegative term $\frac{k_2-2}8$.

Since all the groups of terms are nonnegative and the last one is positive,
the lemma is proved.
\end{proof}

Now combining Lemmas~{\ref{lema:chains}} and~{\ref{lema:2-terminal}}
we obtain Theorem~{\ref{thm:main}}.

\vskip 1pc
\noindent{\bf Acknowledgements.}~~The third and fourth authors acknowledge
partial support by Slovak research grants APVV-15-0220, APVV-17-0428,
VEGA 1/0142/17 and VEGA 1/0238/19.
The research was partially supported by Slovenian research agency ARRS,
program no. P1-0383.
The fifth author acknowledges partial support by National Scholarschip
Programme of the Slovak Republic SAIA.

%
%

\end{document}